%%%%%%%%%%%%%%%%%%%%%%% file template.tex %%%%%%%%%%%%%%%%%%%%%%%%%
%
% This is a general template file for the LaTeX package SVJour3
% for Springer journals.          Springer Heidelberg 2010/09/16
% 
 
% Copy it to a new file with a new name and use it as the basis
% for your article. Delete % signs as needed.
%
% This template includes a few options for different layouts and
% content for various journals. Please consult a previous issue of
% your journal as needed.
%
%%%%%%%%%%%%%%%%%%%%%%%%%%%%%%%%%%%%%%%%%%%%%%%%%%%%%%%%%%%%%%%%%%%
%
% First comes an example EPS file -- just ignore it and
% proceed on the \documentclass line
% your LaTeX will extract the file if required
% [arxiv_v2: filecontents  stripped, 188 chars]
\RequirePackage{fix-cm}
\documentclass[smallextended]{svjour3}       % onecolumn (second format)
\smartqed  % flush right qed marks, e.g. at end of proof
\usepackage{graphicx}
\usepackage[utf8]{inputenc}
\usepackage{amsmath,amssymb,color,graphicx}
\usepackage[usenames,dvipsnames,svgnames,table]{xcolor}
\usepackage{blkarray}
\usepackage{verbatim}
\usepackage{rotating}
\usepackage{tcolorbox}
\usepackage{algorithm,verbatim}
\usepackage[noend]{algpseudocode}
\usepackage{hyperref}
\usepackage{xspace}
\usepackage{array}
\usepackage[disable]{todonotes}
\usepackage{wrapfig}
\usepackage{lineno}
\linenumbers
%
%make it black when you submit

% \usepackage{mathptmx}      % use Times fonts if available on your TeX system
%
% insert here the call for the packages your document requires
%\usepackage{latexsym}
% etc.
%
% please place your own definitions here and don't use \def but
% \newcommand{}{}

 %Stick representation fixed As and fixed Bs
 %Stick representation fixed As and fixed Bs.  A: less notation is better
 %Stick representation fixed As
 %Stick representation fixed As
 %Stick representation fixed As
 %Stick representation Free As free Bs
%
% Insert the name of "your journal" with
% \journalname{myjournal}
%
\begin{document}

\title{Emanation Graph: A Plane Geometric Spanner with Steiner Points\thanks{A preliminary version of this work was presented at  the 30th Canadian Conference on Computational Geometry (CCCG)~\cite{DBLP:conf/cccg/HamedmohseniRM18} and the 46th International Conference on Current Trends in Theory and  Practice of Computer Science (SOFSEM)~\cite{DBLP:conf/sofsem/HamedmohseniRM20}.}
}
%\subtitle{Do you have a subtitle?\\ If so, write it here}

%\titlerunning{Short form of title}        % if too long for running head

\author{Bardia Hamedmohseni        \and
        Zahed Rahmati \and
        Debajyoti Mondal
}

%\authorrunning{Short form of author list} % if too long for running head

\institute{B. Hamedmohseni  \at
              Department of Mathematics and Computer Science, Amirkabir University of Technology \\
              \email{fhamedmohseni@aut.ac.ir}           %  \\
%             \emph{Present address:} of F. Author  %  if needed
           \and
           Z. Rahmati\at
              Department of Mathematics and Computer Science, Amirkabir University of Technology \\
              \email{zrahmati@aut.ac.ir}  
              \and
              D. Mondal\at
              Department of Computer Science, University of Saskatchewan, Canada\\
              \email{dmondal@cs.usask.ca}
}

\date{Received: date / Accepted: date}
% The correct dates will be entered by the editor

\maketitle
 
\begin{abstract}
An emanation graph of grade $k$ on a set of points is a plane spanner made by shooting $2^{k+1}$ equally spaced rays from each point, where the shorter rays stop the longer ones upon collision. The collision points are the Steiner points of the spanner. Emanation graphs of grade one were studied by Mondal and Nachmanson %~\cite{mondal2017}
 in the context of network visualization. They proved that the spanning ratio of such a graph is bounded by $(2+\sqrt{2})\approx 3.414$. We improve this upper bound to $\sqrt{10} \approx 3.162$ and show this to be tight,  i.e., there exist emanation graphs with spanning ratio $\sqrt{10}$. We show that for every fixed $k$, the emanation graphs of grade $k$ are constant spanners, where the constant factor depends on $k$. 
 An emanation graph of grade two   may have twice the number of edges compared to grade one graphs. Hence we
 introduce a heuristic method for simplifying  them.
 In particular,  we compare simplified emanation graphs against Shewchuk's constrained Delaunay triangulations %~\cite{}
on both synthetic and real-life datasets. Our experimental results reveal that the simplified emanation graphs outperform constrained Delaunay triangulations in  common quality measures (e.g., edge count, angular resolution, average degree, total edge length) while maintaining a comparable spanning ratio and Steiner point count.
\keywords{Geometric Spanner \and  Steiner Points \and Network Visualization \and Plane Spanners
}
% \PACS{PACS code1 \and PACS code2 \and more}
% \subclass{MSC code1 \and MSC code2 \and more}
\end{abstract}

%%%%%%%%%%%%%%%%%%%%%%%%%%%%%%%%%%%%%%%%%%%%%%%%%%%%%%%%%%%%%%%%%%%%%%%%%%%%%
%%%%%%%%%%%%%%%%%%%%%%%%%%%%%%%%%%%%%%%%%%%%%%%%%%%%%%%%%%%%%%%%%%%%%%%%%%%%%
%%%%%%%%%%%%%%%%%%%%%%%%%%%%%%%%%%%%%%%%%%%%%%%%%%%%%%%%%%%%%%%%%%%%%%%%%%%%%
%%%%%%%%%%%%%%%%%%%%%%%%%%%%%%%%%%%%%%%%%%%%%%%%%%%%%%%%%%%%%%%%%%%%%%%%%%%%%
\section{Introduction}
Let $G$ be a geometric graph embedded in the plane, weighted with Euclidean distances. Let $u$ and $v$ be a pair of vertices in $G$. Let $d_G(u,v)$ and $d_E(u,v)$ be the minimum graph distance (i.e., shortest path distance) in $G$ and Euclidean distance between $u$ and $v$, respectively. The \textit{spanning ratio} of  $G$ is $\max\limits_{\{u,v\}\in G} \frac{d_G(u,v)}{d_E(u,v)}$, i.e., the maximum ratio between  $d_G(u,v)$ and $d_E(u,v)$  over all pairs of vertices $\{u,v\}$ in $G$. The graph $G$ is called a \emph{$t$-spanner} of the complete geometric graph, if for every pair of vertices $\{u,v\}$ in $G$, the  distance $d_G(u,v)$ is at most $t$ times $d_E(u,v)$.

We examine \emph{plane geometric spanners}~\cite{DBLP:journals/comgeo/BoseS13,DBLP:conf/imr/Owen98}, i.e., no two edges in the spanner cross except at their common endpoints. A natural question in this context is as  follows: Given a set of points $P$ of $n$ points in the plane, can we compute a planar spanner $G=(V,E)$ of $P$ with small size, degree and spanning ratio? We allow the spanner to have Steiner points, i.e., $P\subseteq V$, thus $V$ may contain vertices that do not correspond to any point of $P$. We do not require the paths between a pair of Steiner points nor between a point of $P$ and a Steiner point to have bounded spanning ratio. Thus the  \textit{spanning ratio of a graph $G$ with Steiner points} is $\max\limits_{\{u,v\}\in P} \frac{d_G(u,v)}{d_E(u,v)}$.

Note that keeping the degree, size and spanning ratio of the spanners small is often motivated by application areas, and appeared in the literature~\cite{DBLP:journals/algorithmica/BoseHS18,DBLP:journals/comgeo/BoseS13}.  Nachmanson \textit{et al.}~\cite{DBLP:conf/gd/NachmansonPLRHC15} introduced a system called GraphMaps for interactive visualization of large graphs based on constrained Delaunay triangulations. Later, Mondal and Nachmanson~\cite{mondal2017}  introduced and used a specific mesh  called  the competition mesh to improve GraphMaps (Figure~\ref{abs}). Given a set of points $P$, a \emph{competition mesh} is constructed by shooting from each point, four axis-aligned rays at the same speed, where the shorter rays stop the longer ones upon collision (the rays that are not stopped are clipped by the axis-aligned bounding box of $P$).  This can also be seen as a variation of a motorcycle graph~\cite{DBLP:journals/dcg/EppsteinE99}, which is constructed by the tracks of $n$ motorcycles as follows: All motorcycles start from their initial positions with fixed velocities assigned to them. If a motorcycle meets the track left by another motorcycle, then it crashes or stops.  If two motorcycles collide, both of them crash simultaneously. Note that this is different in a competition mesh, where one of two motorcycles (or rays) is stopped arbitrarily. Motorcycle graphs have been used to solve various computational geometry problems such as in mesh partitioning~\cite{EppsteinGKT08} and computing the straight skeleton of a polygon~\cite{DBLP:journals/talg/ChengMV16}.

\begin{figure*}[pt]   
\centering
\smallskip 
{\includegraphics[width=.3\textwidth, height = 3cm]{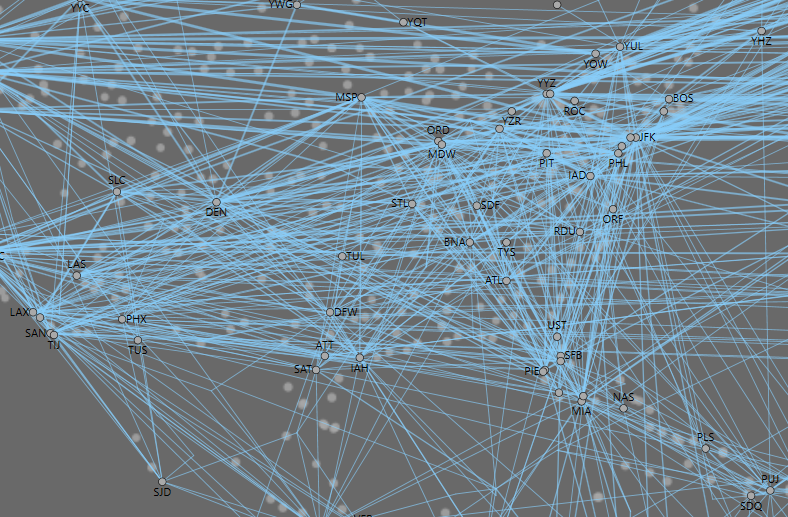}}
\hfill
{\includegraphics[width=.3\textwidth, height = 3cm]{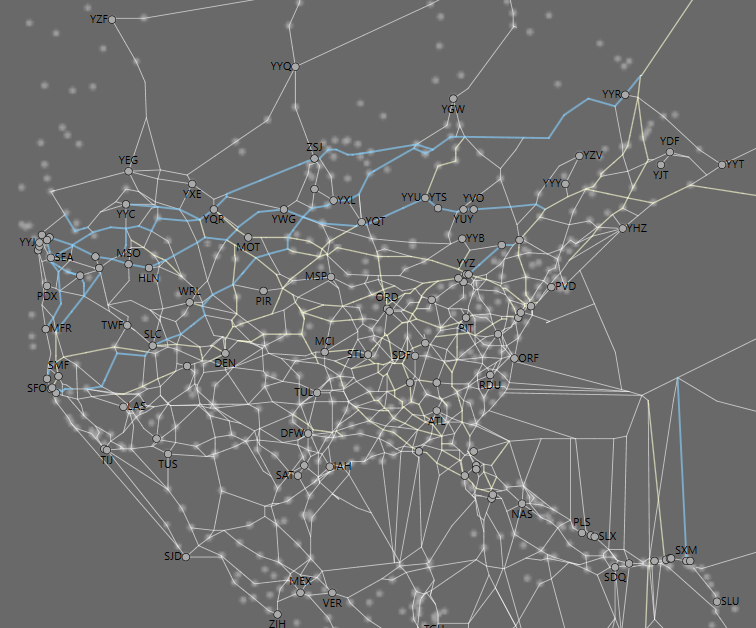}}
\hfill
{\includegraphics[width=.3\textwidth, height = 3cm]{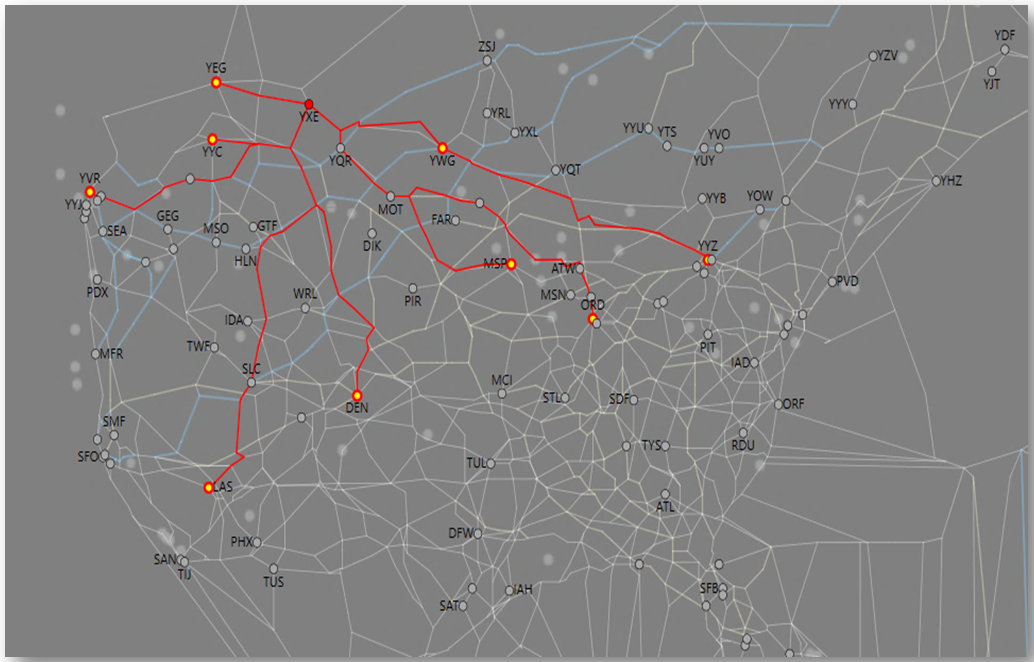}}
\caption{(left) A partial node-link diagram of a flight network. (middle) GraphMaps' visualization~\cite{mondal2017}  obtained by first computing an emanation graph of grade 1 and then moving the Steiner points using various local modifications. (right) Selection of a node.}
\label{abs}
\end{figure*} 
Motivated by the ray shooting idea that the competition mesh used, we introduce 
 a new, general \textit{$t$-spanner} called {\it the emanation graph}. An emanation graph of grade $k$, is obtained by shooting $2^{k+1}$ rays around each given point. Given a set $P$ of $n$ points in the plane, an emanation graph $M_k$ is constructed by shooting $2^{k+1}$ rays from each point $p\in P$ with equal $\frac{\pi}{2^k}$ angles between them. Each ray stops as soon as it hits another ray of shorter length or upon reaching the bounding box $R(P)$, where the lengths are computed using $L_2$  distance metric. If two parallel rays coming from opposite directions collide, then they both stop. If two rays with equal length collide at a point, then one of them is randomly stopped. The competition mesh is thus the emanation graph of grade 1. The Steiner points are created at the intersection point of the rays.  Figure~\ref{fig:simplified}~(left) and~(middle) depict  emanation graphs of grade 1 and 2 with six points in the plane, respectively.  
%Figure~\ref{fig:generalcase} depicts $M_2$ for four points in the plane.

\begin{figure}[h]
\centering
\includegraphics[width=\textwidth,trim={0  0.7cm 0 0},clip]{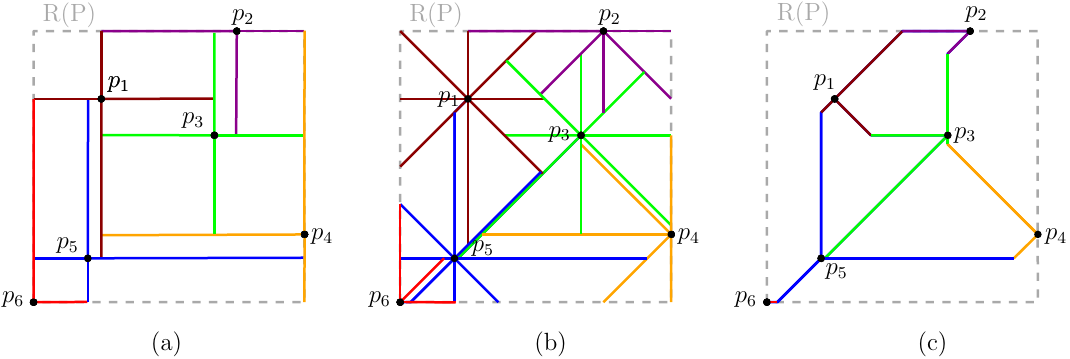}
\caption{(left) An emanation graph of grade 1. (middle) An emanation graph of grade 2. (right) A  simplified version of the emanation graph of grade 2.}
\label{fig:simplified}
\end{figure}

\subsection{Contributions}

In this paper we prove a $\sqrt{10}$ upper bound on the spanning ratio of emanation graphs of grade one, which improves the previously known upper bound of $(2+\sqrt{2})$~\cite{mondal2017}. In contrast, we prove that for $k=1$ (resp., $k>1$), there exist emanation graphs of grade $k$ with spanning ratio ratio  $\sqrt{10}$ (resp., arbitrarily close to  $\sqrt{2}$). We also show that for every fixed $k$, the emanation graphs of grade $k$ are constant spanners, where the constant factor depends on $k$.

Emanation graphs of larger grades allow many redundant edges and Steiner points, i.e., elements that can be removed without increasing the spanning ratio. Redundant edges make  a spanner  visually cluttered and unsuitable for visualization purposes unless we further refine the layout. We propose a simplification for the emanation graphs of grade 2 (e.g., Figure~\ref{fig:simplified}~(right)), which we refer to as {\it Simplified Emanation Graph (SEG)}.

 We compare SEGs with (constrained) Delaunay triangulations~\cite{shewchuk96b} on both real-word geospatial data and synthetic point sets. The synthetic point sets were created from small world graphs by the FMMM algorithm~\cite{fmmm}, which is a  well-known force directed algorithm to create network visualizations. The experimental results show SEG to achieve significantly smaller edge count, average degree, total edge length, and larger angular resolution, with a small increase of spanning ratio.

\subsection{Background}
In the following we describe further background literature related to the planar  spanners (both with and without Steiner points).

The literature on geometric spanners is rich and there are many approaches to construct geometric spanners and meshes. We refer the reader to~\cite{DBLP:journals/comgeo/BoseS13} and~\cite{DBLP:conf/imr/Owen98} for surveys on geometric spanners and mesh generation, respectively.  

Delaunay graphs are one of the most studied plane geometric spanners. Chew~\cite{DBLP:journals/jcss/Chew89} showed that the $L_1$-metric Delaunay graph is a  $\sqrt{10}$-spanner, which was later improved to 2.61 by Bonichon et al.~\cite{DBLP:journals/comgeo/BonichonGHP15}. There have been several attempts to find tight spanning ratio for Delaunay triangulations ($L_2$-metric Delaunay graphs)~\cite{DBLP:journals/comgeo/BoseDLSV11,DBLP:journals/dcg/DobkinFS90,DBLP:journals/dcg/KeilG92}. The currently best known upper and lower bound on the spanning ratio of the Delaunay triangulation are $1.998$~\cite{delauUpper} and $1.5932$~\cite{delauLower}, respectively.

Comparing emanation graphs with traditional spanners such as the Delaunay triangulation and its variants reveals interesting differences. While Delaunay meshes generally have better spanning ratios, there is no guarantee on the minimum angle between edges incident to the same node, \emph{i.e.} angular resolution of the resulting graph. Shewchuk~\cite{shewchuk96b} has thoroughly examined the {\it angular constraints} on Delaunay triangulations and introduced a Delaunay mesh generation algorithm which adds Steiner points to the original vertex set to increase the graph's angular resolution; however, the termination of this algorithm is not guaranteed for angular constraints over $34^{\circ}$. For an emanation graph, the angular resolution is determined by its grade $k$, and all emanation graphs of grade $k=2$ have $45^{\circ}$ angular resolution.

A {\it $\Theta_6$-graph}~\cite{thetaoriginalpaper1} is formed by partitioning the space around each vertex $v$ into six cones of equal angle, and then connecting the vertex $v$ to the bisector nearest neighbor in each cone; the bisector nearest neighbor in a cone means the vertex whose projection on the bisector of the cone is closest to $v$. A {\it half-$\Theta_6$-graph} is a plane geometric spanner, which is constructed in the same way except that the neighbors are only considered in the first, third and fifth cones (for some fixed clockwise ordering of the cones).  The half-$\Theta_6$-graphs were introduced by Bonichon et al.~\cite{bonichonTD}. They showed that half-$\Theta_6$-graphs have interesting connections to triangular-distance Delaunay triangulations~\cite{DBLP:journals/jcss/Chew89}, which implies that half-$\Theta_6$-graphs are 2-spanners~\cite{bonichonTD}.

While both the Delaunay triangulations and half-$\Theta_6$-graphs have a linear number of edges and small spanning ratio, they may have vertices with {\it unbounded} degree. Bose et al.~\cite{DBLP:journals/algorithmica/BoseGS05} showed that plane $t$-spanners of bounded degree exist (for some constant $t$). A significant amount of research followed this result, which examines the construction of bounded degree plane spanners with low spanning ratio. Some of the best known spanning ratios for spanners with maximum degree $4,6$ and $8$ are $20$~\cite{DBLP:journals/jocg/KanjPT17}, $6$~\cite{DBLP:conf/icalp/BonichonGHP10} and $4.414$~\cite{DBLP:journals/algorithmica/BoseHS18}, respectively. 

Although there exist point sets that do not admit a planar spanner of spanning ratio less than 1.43~\cite{DBLP:journals/ijcga/DumitrescuG16}, by allowing $O(n)$ Steiner points, one can obtain $(1+\epsilon)$-spanners, for any $\epsilon>0$. Arikati et al.~\cite{DBLP:conf/esa/ArikatiCCDSZ96} showed that one can construct a plane geometric $(1+\epsilon)$-spanner with $O(n/\epsilon^4)$ Steiner points. Bose and Smid~\cite{DBLP:journals/comgeo/BoseS13} asked whether the dependence on $\epsilon$ can be improved. Recently, Dehkordi et al.~\cite{DBLP:journals/jgaa/DehkordiFG15} proved that any set of $n$ points admits a planar angle-monotone graph of width $90^\circ$ with $O(n)$  Steiner points. Since an angle monotone graph of width $\alpha$ is a $\frac{1}{cos (\alpha/2)}$-spanner~\cite{DBLP:conf/gd/BonichonBCKLV16}, this implies the existence of a $\sqrt{2}$-spanner with $O(n)$ Steiner points, which may contain vertices of {\it unbounded} degree. See~\cite{abs-1801-06290} for more details on the construction of angle-monotone graphs with Steiner points.

Note that instead of choosing three cones in the half-$\Theta_6$-graph, one can connect a vertex to the bisector nearest neighbors in all the six cones, which gives rise to the full-$\Theta_6$-graphs. {\color{blue}The concept has also been extended to full-$\Theta_r$ graphs~\cite{thetaUpper,thetaoriginalpaper1,thetaoriginalpaper2}, where the space around each vertex is partitioned into $r$ cones of equal angle $\theta=2\pi/r$.  Similarly, there exist Yao-graphs $Y_r$~\cite{DBLP:journals/siamcomp/Yao82,DBLP:journals/dmaa/DamianR12}, where the nearest neighbor in a cone is chosen based on the Euclidean distance. However, all these  generalizations yield   non-planar  spanners.  Researchers have also examined $\Theta$-graphs and Yao-graphs with fewer than six cones, e.g., it is known that $\Theta_4$, $\Theta_5$,   $Y_4$ and $Y_5$ graphs are  spanners~\cite{barba2013stretch,DBLP:journals/comgeo/BoseMRV15,DBLP:journals/ijcga/BoseDDOSSW12,DBLP:journals/jocg/BarbaBDFKORTVX15} but $\Theta_p$ and $Y_q$ graphs are not spanners for any $p<q<4$~\cite{el2009yao}.} %~\cite{10.1007/978-3-030-83508-8_16,DBLP:conf/soda/BoseCHS19}.  

%%%%%%%%%%%%%%%%%%%%%%%%%%%%%%%%%%%%%%%%%%%%%%%%%%%%%%%%%%%%%%%%%%%%%%%%%%%%%
%%%%%%%%%%%%%%%%%%%%%%%%%%%%%%%%%%%%%%%%%%%%%%%%%%%%%%%%%%%%%%%%%%%%%%%%%%%%%
%%%%%%%%%%%%%%%%%%%%%%%%%%%%%%%%%%%%%%%%%%%%%%%%%%%%%%%%%%%%%%%%%%%%%%%%%%%%%
%%%%%%%%%%%%%%%%%%%%%%%%%%%%%%%%%%%%%%%%%%%%%%%%%%%%%%%%%%%%%%%%%%%%%%%%%%%%%
\section{Spanning Ratio of Emanation Graphs}
In this section we present some upper and lower bounds on  the spanning ratio of emanation graphs. We first prove the upper bounds in Sections~\ref{ub1}--\ref{ub2} and then prove the lower bounds on Section~\ref{lb}.

\subsection{Emanation Graphs of Grade One}
\label{ub1}

The following theorem shows a $\sqrt{10}$ upper bound  on the spanning ratio of an emanation graph.

\begin{theorem}
\label{thm:ubg1}
The spanning ratio of every emanation graph of grade one is at most  $\sqrt{10}\approx 3.162$.
\end{theorem}
\begin{proof}
Let $s$ and $t$ be a pair of vertices in the emanation graph of grade one. Consider four cones around $s$, where the cones are determined by two lines passing through $s$ with slopes $+1$ and $-1$, respectively, as illustrated in   Figure~\ref{fig:ub}; recall the line-segments that are part of the emanation graph are horizontal and vertical. Without  loss of generality assume that $t$ lies in the rightward cone $C$  of $s$. 

\begin{figure}[h]
\centering
\includegraphics[scale=1]{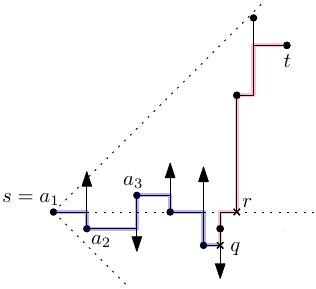}
\caption{Illustration for the proof of Theorem~\ref{thm:ubg1}.} \label{fig:ub}
\end{figure}
We now construct an $x$-monotone path $P_x$, which lies entirely in cone $C$, as follows: The path starts at $s$ and for each original vertex in this cone, the path follows its rightward segment $\ell$. If a rightward segment is stopped by another segment $\ell'$, then the path follows $\ell'$ to the original vertex  that created $\ell'$, and  continues to follow the rightward segment of this vertex. Note that $P_x$ stops at a point on the right boundary of $R(P)$, the bounding box.  Figure~\ref{fig:ub} illustrates a subpath $a_1,a_2,\ldots, q$ of $P_x$ in blue; here $s=a_1$. For any subpath $a_i,\ldots,a_j$ on $P_x$, we will use the notation $Y_{a_ia_j}$ (resp., $X_{a_ia_j}$) to refer to the sum of  the lengths of all the vertical (resp., horizontal) segments in $a_i,\ldots,a_j$. 

By construction of $P_x$ and the definition of the emanation graph, the length of any horizontal segment on $P_x$ is at least as large as the subsequent vertical segment. Hence for every subpath $a_i,\ldots,a_j$ in $P_x$, which  starts with a horizontal segment, we will have $X_{a_ia_j}\ge Y_{a_ia_j}$.

Without loss of generality assume that $t$ lies on or above $P_x$. We now construct another path $P_y$ starting at $s$ using the same construction as that of $P_x$, but following the upward segments instead of rightward ones. Note that $t$ is now in the region bounded by the paths $P_x$ and $P_y$. We now construct a directed path, called the \emph{$(-x,-y)$-monotone path} $P_t$ starting at $t$, which is in the reverse direction of the $(x,y)$-monotone path.  $P_t$ starts at $t$ and follows the leftward segment.  Since $t$ lies in the region bounded by $P_x$ and $P_y$, the path $P_t$ must intersect one of these two paths. If the last segment $\ell$ of $P_t$ is stopped by a horizontal (resp., vertical)  segment $\ell'$, then we follow $\ell'$ towards the leftward (resp., downward) direction. 
%$P_t$ that starts at $t$ {\color{red}...$P_t$ might cross past $s$.}
Note that $P_t$ now either intersects $P_x$ or $P_y$. Hence we consider the following two cases.

Case 1 ($P_t$ intersects $P_x$ at point $q$): This case is illustrated in Figure~\ref{fig:ub}.  
%Assume first that $P_t$ intersects $P_x$ at point $q$ (see Figure~\ref{fig:ub}(a)). 
Let $\ell_h$ be the horizontal line through $s$. 

Assume first that $t$ lies above  $\ell_h$ and $q$ lies below $\ell_h$. Let $r$ be the rightmost intersection point of $P_t$ with $\ell_h$. Thus the sum of the length of the subpath of $P_x$ from $s$ to $q$ and the subpath of $P_t$ from $q$ to $t$ is as follows: 

%The blue path from $s$ to $q$ is formed by vertices $a_i i\in N$ such that the horizontal distance from $a_i$ to $a_{i+1}$ is always longer than their vertical distance, hence the path will always stay in the cone. The red path from $q$ to $t$ is a xy-monotone path, meaning that it only travels upwards or to right in each step. Let $Y_{sq}$ be the sum of the lengths of all the vertical segments on the path from $s$ to $q$. Then the length of the path from $s$ to $t$ is as follows.
\begin{align*}
|sq|_x+ Y_{sq} + |qt|_x + |qt|_y   
=& (|sq|_x+ |qt|_x) + Y_{sq}  + |qt|_y  \\
=& |st|_x + Y_{sq}  + |qt|_y, && \text{i.e., } |st|_x = |sq|_x+ |qt|_x \\
=& |st|_x + Y_{sq} + |qr|_y + |rt|_y, && \text{i.e., } |qt|_y = |qr|_y + |rt|_y &\\
\le&  2|st|_x + Y_{sq} + |rt|_y, && \text{i.e., } |qr|_y \le |st|_x \\
\le& 2|st|_x + |st|_x + |rt|_y, && \text{i.e., } Y_{sq}  \le |st|_x \\ 
=& 3|st|_x + |st|_y, && \text{i.e., } |rt|_y  \le |st|_y. \\
\end{align*}
Here $|st|_x$ (resp. $|st|_y$) denotes the horizontal (resp. vertical) distance between $s$ and $t$. Therefore, the spanning ratio is:
\[f=\frac{(3|st|_x + |st|_y)}{\sqrt{(|st|_x)^2 + (|st|_y)^2}}.\]
To find an upper bound we need to maximize $f$. By setting $|st|_x = 3|st|_y$, the maximum for $f$ obtains, \emph{i.e.}, $f\leq \sqrt{10} \approx 3.162$.

Assume now that $t$ and $q$ both lie on the same side of $\ell_h$.  Hence  the sum of the lengths of the paths  from $s$ to  $t$ is $
|st|_x + Y_{sq}  + |qt|_y \le  2|st|_x   + |rt|_y$. If $t$ and $q$ are below $\ell_h$, then $|qt|_y \le |st|_x$. If they are above  $\ell_h$, then $|qt|_y \le |st|_y$. Hence the path length is bounded by $3|st|_x + |st|_y$ and the upper bound of $\sqrt{10}$ holds.

Case 2 ($P_t$ intersects $P_y$ at point $q$): %Assume now that $P_t$ intersects $P_y$ at point $q$. But this 
This case would be the same as when $P_t$ intersects $P_x$ with $t$ lying on the upward cone of $s$. However, applying the same analysis, we again get an upper bound of $(3|st|_x + |st|_y)$ on the length of the path $s,\ldots,q,\ldots,t$, and hence an upper bound of $\sqrt{10}$.  \qed
\end{proof}
\begin{figure}[h]
\centering
\includegraphics[scale=1]{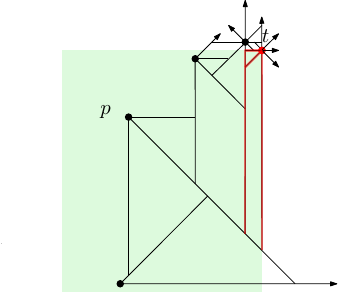}
\caption{The  maximal $(-x,-y)$-monotone paths from $t$ are shown in red.  } \label{fig:ub1}
\end{figure}
Note that the above  upper bound proof does not hold for   emanation graphs of grade 2, as the required monotone paths may not exist. For example, Figure~\ref{fig:ub1} depicts a scenario where we cannot extend an $(-x,-y)$-monotone path from $t$ to reach the bottom boundary of $R(P)$.

\subsection{Emanation Graphs of Grade $k$}
\label{ub2} 

In this section, we prove an upper bound on the spanning ratio of emanation graphs of grade $k$.   The proof will rely on the concept of angle-monotone paths. A polygonal path is an \emph{angle-monotone path of width $\gamma$} if {\color{blue} the vector of every edge  lies in a closed wedge of angle $\gamma$ %the angles of any two edges in the path differ by at most $\gamma$ 
(Figure~\ref{fig:am}~(left)). In other words, there exists an angle $\beta$ such that every edge vector is between $\beta+\frac{\gamma}{2}$ and $\beta-\frac{\gamma}{2}$.}  Every angle-monotone path of width $\gamma$ is an $(\frac{1}{\cos (\gamma/2)})$-approximation of the Euclidean distance between its endpoints~\cite{DBLP:conf/gd/BonichonBCKLV16}. A geometric graph in the plane is  \emph{angle-monotone of width $\gamma$} if every pair of vertices is connected by an angle-monotone path of width $\gamma$. Hence these graphs are also $(\frac{1}{\cos (\gamma/2)})$-spanners.

\begin{figure}[h]
\centering
\includegraphics[width=.85\textwidth, trim={0 0.48cm 0 0},clip]{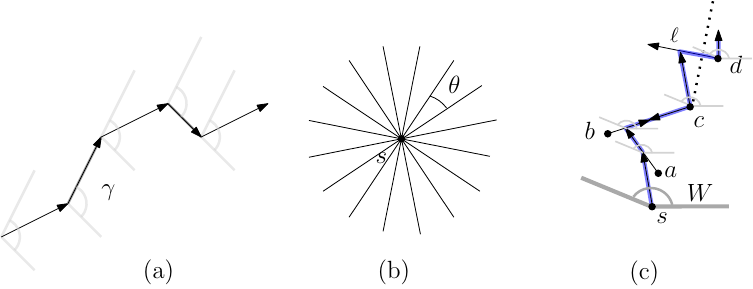}
\caption{(left) An angle-monotone path of width $\gamma$. (middle) Illustration for the cones. (right) Illustration for  $P(W)$. {\color{blue}The dotted line illustrates the bisector of the wedge with apex at $c$.}}
\label{fig:am}
\end{figure}

Let $M$ be an emanation graph with $r=2^{k+1}$ rays,  and let $s$ and $t$   be a pair of vertices in $M$. Recall that the rays around a vertex  create $r$ cones of equal angle $\theta = \frac{2\pi}{r}$. We rotate the plane by an angle of  $\theta/2$ such that no rays are axis aligned, e.g., see Figure~\ref{fig:am}~(middle). 
Let $W$ be an upward wedge of angle $(\pi-\theta)$ with apex at $s$ such that one side is aligned along the  horizontal line passing through $s$ (Figure~\ref{fig:am}~(right)). By $P(W)$ we denote a path {\color{blue}{inside $W$ that starts from the apex of $W$}}  and continues as follows: If a segment $\ell$ stops the last segment of the current path, then we move towards the direction which is monotone with respect to the bisector of $W$ {\color{blue}(Figure~\ref{fig:am2}~(left))}. If $\ell$ is perpendicular to the bisector, then we move towards the source of $\ell$ {\color{blue}(Figure~\ref{fig:am2}~(right))}. If  we reach an original vertex, then we repeat the process {\color{blue}until we reach the bounding box $R$ of the point set.  The following lemma establishes the property that $P(W)$  
 lies inside $W$.
\begin{lemma}
The path $P(W)$ lies  inside $W$.
\end{lemma}
\begin{proof}
Let $p_1(=s),p_2,\ldots,p_k$ be the path $P(W)$. The first segment $p_1p_2$  of $P(W)$ is clearly inside $W$. Assume now that the segments  $p_{i-1}p_{i}$, where $2\le i\le k-1$, lie inside $W$. We now consider the segment $p_{k-1}p_{k}$. 

If $p_{k-1}$ is an original vertex, then by construction, $p_{k-1}p_k$ is a segment inside the wedge of ${p_{k-1}}$, and hence it is inside $W$. We now consider the case when  $p_{k-2}p_{k-1}$ is stopped by a segment $\ell$.

If the vector of $\ell$ is inside the %congruent
wedge of ${p_{k-1}}$, then we route $P(W)$ along  $p_{k-1}p_{k}$ such that it is monotone with respect to the bisector (Figure~\ref{fig:am2}~(left)). Consequently, $p_{k-1}p_{k}$ lies inside $W$. 

If the vector of $\ell$ is outside of the wedge of ${p_{k-1}}$, then  $\ell$ must be perpendicular to the bisector of the wedge of ${p_{k-1}}$  (Figure~\ref{fig:am2}~(right)). Here we route $P(W)$ towards the source $r$ of $\ell$. The smallest angle that $p_{k-2}p_{k-1}$ can make with the sides of the wedge of ${p_{k-2}}$ is $\theta/2$. Since $rp_{k-1}$ is shorter than $p_{k-2}p_{k-1}$, the segment $p_{k-1}p_{k}$ must lie inside the wedge of ${p_{k-2}}$ and hence also inside $W$.      \qed 
\end{proof} 

\begin{figure}[pt]
\centering
\includegraphics[width=.75\textwidth]{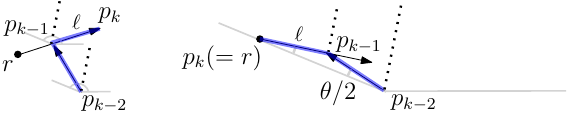}
\caption{(left) Illustration for $\ell$ when the corresponding vector lies inside the wedge of ${p_{k-1}}$.  (right) A scenario when $\ell$ is perpendicular to the bisector.}
\label{fig:am2}
\end{figure}
}

{\color{blue}
We are now ready to describe the construction of a path between a pair of vertices $s$ and $t$.} We first define wedges $W_1,W_2,\ldots, W_r$ around $s$ (Figure~\ref{fig:case2-ub}), where $W_1$ coincides with $W$ and the subsequent wedges are obtained by rotating $W$ counter clockwise by an angle of $\theta$. Let $P(W_1), P(W_2) \ldots P(W_r)$ be the corresponding angle monotone paths {\color{blue} of width $(\pi-\theta)$}. Without loss of generality assume that $t$ is at the rightward cone $C$ of $s$, {\color{blue}i.e., $C$ contains the positive x-axis (Figure~\ref{fig:case2-ub2new}). Let $P(W_j)$ and $P(W_{j+1})$ be a pair of angle monotone paths, where $1\le j\le r$ and $W_{r+1} = W_1$. Note that both of these paths end at the bounding box $R$ of the point set. Let $S_{j,j+1}$ be the region bounded by $P(W_j)$, $P(W_{j+1})$ and $R$. Note that the paths $P(W_j)$ and  $P(W_{j+1})$  may intersect multiple times. We now consider two cases depending on whether there exists some $i$, where $1\le i\le r$, such that $t$ lies in $S_{i,i+1}$. 
}

%We now consider two cases depending on whether there exist two paths $P(W_i)$ and $P(W_{i+1})$ that bound $t$ (Here $W_{r+1} = W_1$). 

%Let $W_u$ and $W_d$ be two wedges, each of angle $(\pi-\theta)$ and contains $t$. Furthermore, $P(W_u)$ contains $t$ or lies above $t$, and similarly, $P(W_d)$ contains $t$ or lies below $t$. Let $u_1$ and $u_2$ be sides of $W_u$ that lie above and below $t$, respectively. Similarly, $d_1$ and $d_2$ be the sides of $W_d$ that lie above and below $t$, respectively. We now consider two cases depending on whether there exist 

\begin{figure}[h]
\centering
\includegraphics[width=.7\textwidth]{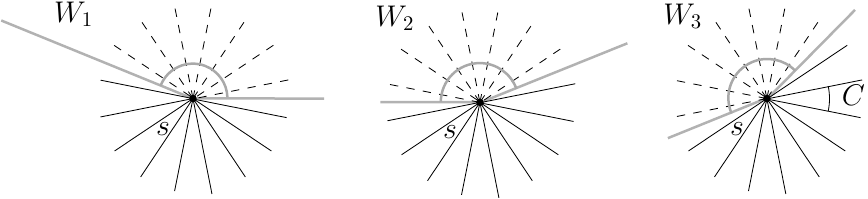}
\caption{Illustration for the wedges.}
\label{fig:case2-ub}
\end{figure}
\textbf{Case 1 (There exists a region $S_{i,i+1}$ that  contains $t$):} %We first consider the case where there exist $P(W_i)$ and $P(W_{i+1})$ bounding $t$ in between. 
{\color{blue} Since the wedges $W_i$ and $W_{i+1}$ are consecutive, the right side of $W_i$ and the left side of $W_{i+1}$ lie on the same line. Let $W_t$ be the  wedge of angle $(\pi-\theta)$ at $t$ that intersects the right side of $W_i$ at a point $b$ and the left side of $W_{i+1}$ at a point $a$ where $\angle tab = \angle tba = \theta/2$. Figure~\ref{fig:case2-ub2new}(left) illustrates an example where $W_i$ and $W_{i+1}$ are shown in blue and green, respectively. Figure~\ref{fig:case2-ub2new}(right) illustrates $W_t$  in gray.}
%forming a triangle $abt$, as illustrated in Figure~\ref{fig:case2-ub2}(top-left). We choose $W_t$ such that  $\angle tab = \angle tba = \theta/2$. 

{\color{blue}Since $t$ belongs to $S_{i,i+1}$, it suffices to consider the following two scenarios to construct a path between $s$ and $t$. 

 \textit{Case 1.1:}} The path $P(W_t)$ intersects either $P(W_i)$ or $P(W_{i+1})$ at some point $q$ inside the triangle $\Delta tab$. We now use the path $P'=(s,\ldots,q,\ldots,t)$ to compute an upper bound on the spanning ratio. Here the length of $s,\ldots,q$ is at most the length of an angle monotone path of width $(\pi-\theta)$ from $s$ to $q$, plus the distances travelled along the segments that are perpendicular to the bisector of the wedges. Thus the total length is bounded by twice the length of an angle monotone path of width $(\pi-\theta)$ from $s$ to $q$. {\color{blue} Since the length of an angle-monotone path of width $\gamma$   between two points $a,b$ is at most $\frac{d_E(a,b)}{\cos (\gamma/2)}$~\cite{DBLP:conf/gd/BonichonBCKLV16}, the length of $s,\ldots,q$ is at most $\frac{2d_E(s,q)}{\cos (\pi/2-\theta/2)}$.  
 Since $\Delta tab$ is an isosceles triangle, $d_E(s,q)\le d_E(a,b) \le \frac{2h}{\tan{(\theta/2)}}$, where $h$ is the perpendicular distance from $t$ to $ab$. Since $h\le d_E(s,t)$, we have $\frac{2d_E(s,q)}{\cos (\pi/2-\theta/2)} \le \frac{4d_E(s,t)}{\tan{(\theta/2)}\cos (\pi/2-\theta/2)}$. Hence, if $k$ (equivalently, $\theta$) is fixed, the length of $s,\ldots,q$ can be expressed as $\delta d_E(s,t)$, where $\delta$ is a constant. Using a similar analysis for $P(W_t)$ one can show the length of $q,\ldots,t$ to be bounded by  $\delta' d_E(s,t)$, where $\delta'$ is a constant. Consequently, the length of the path $P'$ is at most $(\delta+\delta')d_E(s,t)$.}

\begin{figure}[h]
\centering
\includegraphics[width=\textwidth]{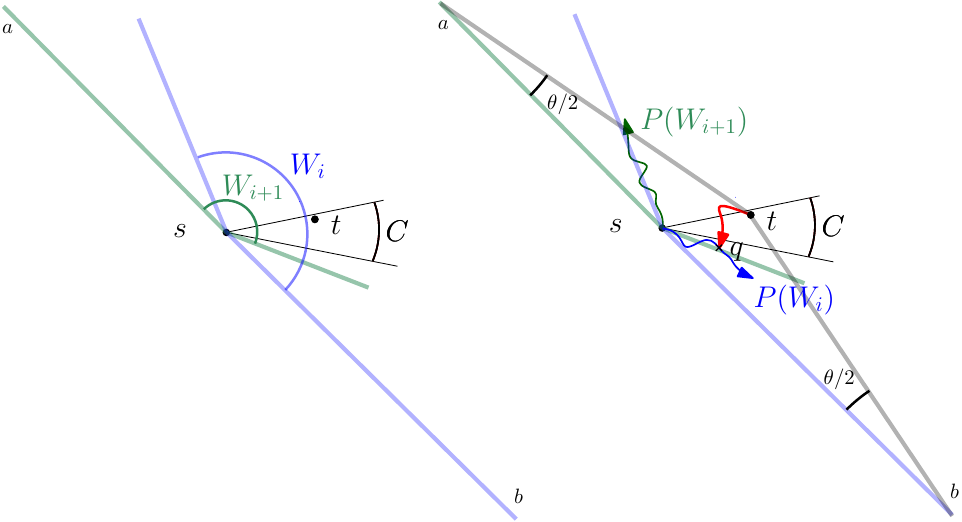}
\caption{Illustration for Case 1. (left) The wedges $W_i$ and $W_{i+1}$. (right) The construction of an $s$ to $t$ path.}
\label{fig:case2-ub2new}
\end{figure}
{\color{blue}
\textit{Case 1.2:} The path $P(W_t)$ intersects the bounding box $R$ at point $p$ inside triangle $\Delta tab$. Since $p$ is also inside $S_{i,i+1}$, there exists an orthogonal segment $pq$ inside $\Delta tab$ that intersects either $P(W_i)$ or $P(W_{i+1})$  at  point $q$. We now can use the path $P'=(s,\ldots,p,q,\ldots,t)$ to compute an upper bound on the spanning ratio.
Since $\theta\le  \pi/2$ the largest segment in $\Delta tab$ is $ab$, which is of length at most $\frac{2d_E(s,t)}{\tan{(\theta/2)}}$. Therefore, using an analysis similar to Case 1.1, we can express the length of $P'$ as $\delta'' d_E(s,t)$, where $\delta''$ is a constant.
%Thus the total length is bounded by twice the length of an angle monotone path of width $(\pi-\theta)$ from $s$ to $q$. The length of $s,\ldots,q$ is $(\frac{d_E(a,b)}{\cos (\pi-\theta/2)})$
}

%the path $P(W_t)$ intersects the bounding box $R$ at point $q'$ and then intersects either $P(W_i)$ or $P(W_{i+1})$ at some point $q$. We now use the path $P'=(s,\ldots,q,\ldots,t)$ to compute an upper bound on the spanning ratio. 

%Here the length of $s,\ldots,q$ is at most the length of an angle monotone path of width $(\pi-\theta)$ from $s$ to $q$, plus the distances travelled along the segments that are perpendicular to the bisector of the wedges. Thus the total length is bounded by twice the length of an angle monotone path of width $(\pi-\theta)$ from $s$ to $q$. The length of $q,\ldots, t$ is the length of an angle monotone path of width $(\pi-\theta)$ from $q$ to $t$. Hence the length of $P'$ is at most 
%$\frac{2d_E(s,q)+d_E(q,t)}{\cos(\pi/2-\theta/2)}$. This term is maximized when  $q$ coincides with  $b$. Here $d_E(q,t) \le  d_E(s,t)/\sin (\theta/2)$ and $d_E(s,q)   \le d_E(s,t) \cos (\theta/2) + d_E(s,t)/\sin (\theta/2)$. Consequently, the length of $P'$ is at most  
%\begin{align}\label{eq1}
%    &\frac{2d_E(s,q)+d_E(q,t)}{\cos(\pi/2-\theta/2)} = \frac{d_E(s,t)}{\sin(\theta/2)} \left(2\cos (\theta/2) + \frac{3}{\sin (\theta/2)}\right) 
%\end{align}

\textbf{Case 2 {\color{blue}(There does not exist any region $S_{i,i+1}$ that  contains $t$)}:} %The remaining case is  when no pair of paths $P(W_i)$ and $P(W_{i+1})$ bound $t$ in between. Without loss of generality assume that 
In this case, for every wedge $W_i$ containing  $t$, the vertex $t$ lies on the same side of $P(W_i)$. {\color{blue} Recall that} $t$ is in the rightward cone $C$ of $s$ which contains the positive x-axis. Therefore, either $W_1$ or $W_{r/2+2}$ contains $t$. Without loss of generality assume the wedge $W' = W_1$ contains $t$. We now consider two scenarios depending on whether $t$ lies above or below the path $P(W')$.

\begin{figure}[h]
\centering
\includegraphics[width=\textwidth]{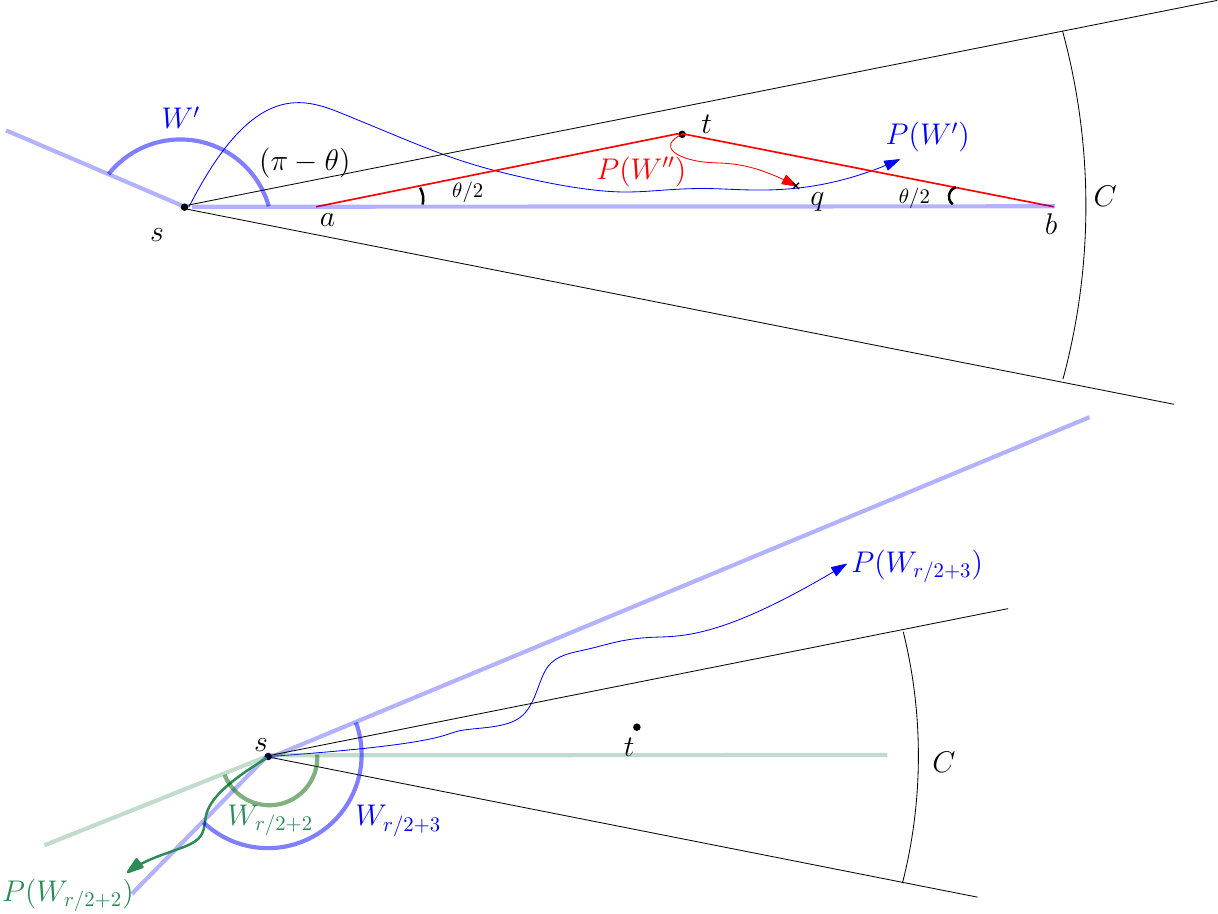}
\caption{Illustration for Case 2. (top) The construction of an $s$ to $t$ path. (bottom) Illustration for the wedges $W_{r/2+2}$ and $W_{r/2+3}$.}
\label{fig:split}
\end{figure}

\textit{Case 2.1:} If $t$ lies above the path $P(W')$, then consider a downward wedge $W''$ with apex at $t$ of angle $(\pi-\theta)$ {\color{blue}such that the bisector of $W''$ is perpendicular to the x-axis. Let $a$ and $b$ be the point of intersections of $W'$ with the x-axis where $a$ is to the left of $b$.} %, as illustrated in
Figure~\ref{fig:split} (top) illustrates such a scenario.  

{\color{blue}
First consider the case when $P(W')$ intersects $P(W'')$ at a point $q$. 
%Observe that the length of each of $d_E(s,q)$ and $d_E(t,q)$ is at most  constant times $d_E(s,t)$, as follows. 
Since $\theta\le \pi/2$, we have $d_E(s,q)\le d_E(s,b) \le 2|st|_x\le 2d_E(s,t)$. Similarly, $d_E(t,q)\le d_E(t,a) \le d_E(s,t)$. We now can use the analysis of Case 1.1 to first bound the length of the paths $s,\ldots,q$ and $q,\ldots,t$, and then   show that the total length is at most a constant times $d_E(s,t)$. The case where $P(W')$ intersects the bounding box $R$ inside $\Delta tab$ can be handled in the same way as in Case 1.2. 
}

%be the intersection point of the paths $P(W')$ and $P(W'')$. 
%Hence we can obtain a similar analysis as in Case 1 to have the length of the path to be at most 
%$\frac{2d_E(s,q)+d_E(t,q)}{\cos(\pi/2-\theta/2)}$. Here $d_E(s,q)\le  d_E(s,t)$ and $d_E(t,q)\le  d_E(s,t)/ \sin(\theta/2)$. Therefore,  
%\begin{align}\label{eq2}
%    &\frac{2d_E(s,q)+d_E(q,t)}{\cos(\pi/2-\theta/2)} = \frac{d_E(s,t)}{\sin(\theta/2)} \left(2  + \frac{3}{\sin (\theta/2)}\right).
%\end{align}

\textit{Case 2.2:} If $t$ lies below the path $P(W')$, then we can find two successive cones $W_i$ and $W_{i+1}$ such that $P(W_i)$ and $P(W_{i+1})$ enclose $t$, as follows.

{\color{blue}Recall our assumption in Case 2 that for every wedge $W_i$ containing  $t$, the vertex $t$ lies on the same side of $P(W_i)$. Since $t$ lies below the path $P(W')$, it must also lie below the  path $P(W_{r/2+3})$ (Figure~\ref{fig:split} (bottom)). However, since $t$ lies in $W_1=W'$, the adjacent wedge $W_{r/2+2}$ does not contain $t$ and thus $t$ would lie above $P(W_{r/2+2})$. %Specifically, we can choose $i$ to be $r/2+3$. Since $t$ lies below $P(W_i)$ and above $P()$
Hence we can find a region $S_{r/2+3,r/2+2}$ that contains $t$, which contradicts the assumption of Case 2.}

%We can then apply the same analysis as in Case 1. 

The following theorem summarizes the result of this section. 
\begin{theorem}
\label{thm:ub2}
For every fixed $k$, an emanation graph of grade  $k$ is a constant spanner, where the constant factor depends on the value of $k$.  
\end{theorem}

Note that Theorem~\ref{thm:ub2}  is only of theoretical interest as the constant factor we obtain are very large. %For example, when $k=2$, we obtain a spanning ratio of 25.71 from Equations~\ref{eq1}-\ref{eq2}. 

\subsection{Lower Bound}
\label{lb}
The following theorem proves a lower bound on the spanning ratio of the emanation graphs.

%\section{Lower Bound}\label{sec:lb}
%In this section, we prove the lower bounds on the spanning ratio of the emanation graphs.

%%%%%%%%%%%%%%%%%%%%%%%%%%%%%%%%%%%%%%%%%%%%%%%%%%%%%%%%%%%%%%%%%%%%%%%%%%%%%
%%%%%%%%%%%%%%%%%%%%%%%%%%%%%%%%%%%%%%%%%%%%%%%%%%%%%%%%%%%%%%%%%%%%%%%%%%%%%
%%%%%%%%%%%%%%%%%%%%%%%%%%%%%%%%%%%%%%%%%%%%%%%%%%%%%%%%%%%%%%%%%%%%%%%%%%%%%
%%%%%%%%%%%%%%%%%%%%%%%%%%%%%%%%%%%%%%%%%%%%%%%%%%%%%%%%%%%%%%%%%%%%%%%%%%%%%
\begin{figure}[h]
\centering
\includegraphics[width=\textwidth, trim = {0 .4cm 0 0},clip]{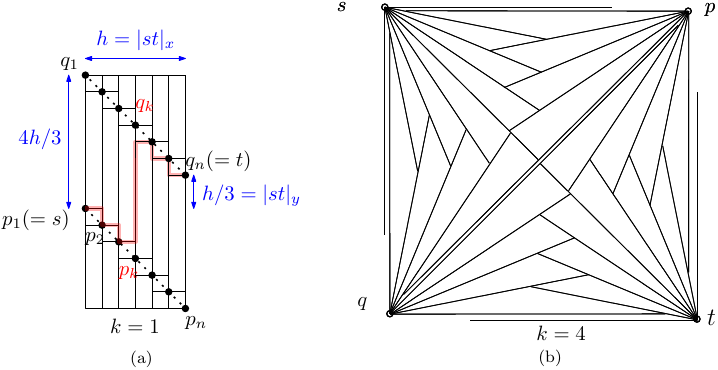}
\caption{Illustration for lower bound proof.} \label{fig:lb}
\end{figure}

\begin{theorem}
\label{the:lb}
There exists an emanation graph of grade $1$ with spanning ratio arbitrarily close to $\sqrt{10}$. %There exists an emanation graph of grade $2$ with spanning ratio arbitrarily close to $\sqrt{2}$.  
For every $k\geq 2$, there exists an emanation graph of grade $k$ with spanning ratio  arbitrarily close to $\sqrt{2}$.
\end{theorem}
\begin{proof}
We refer the reader to Figures~\ref{fig:lb}~(left)--(right), which depict the cases $k=1$ and $k=2$, respectively.

%In both cases,  the height of the bounding box $R(P)$ is larger than its height by a small positive constant $\varepsilon$, and the spanning ratio is realized between $s$ and $t$. For $k=1$, the spanning ratio is $\frac{ 2(|sa|+|bt|)+\varepsilon}{(|sa|+|bt| +\varepsilon)}\approx 2$. For $k=2$, the spanning ratio is at least $\frac{ |sp|+|pt|}{(|sa|+|bt|)+\varepsilon} = \frac{ |sp|+|pt|}{|sp|\sqrt{2}+\varepsilon}\approx \sqrt{2}$.

\textbf{Case 1 ($k=1$):} We construct a set $\{p_1,\dots,p_n,q_1,\dots,q_n\}$ of $2n$ points as follows. The two points $p_1 (=s)$ and $q_n (=t)$, which will achieve the lower bound, are lying at (0,0) and $(h,h/3)$, where $h$ is a positive integer.   Imagine two parallel guiding lines with slope $-1$ through $s$ and $t$, as shown in dashed lines. One of the two guiding lines goes through $s$ and through $t$. As shown in the figure, the top-left corner of the bounding box $R$ is determined by the intersection of the vertical line through $s$ and the guiding line that starts at $t$. The bottom-right  corner of $R$  is determined by the intersection of the vertical line through $t$ and the guiding line that starts at $s$. We now place the points equidistantly on the two guiding lines such that each guiding line contains $n$ points; see Figure~\ref{fig:lb}~(left). From this, $|st|_x=h$, $|st|_y=h/3$, and $|sq_1|_y=4h/3$. 

It is straightforward to observe from the structure of the emanation graph that a shortest path must be   $x$-monotone. Let $P$ be a simple $x$-monotone path between $s$ and $t$. For every index $i$ from 1 to $n$, let $\ell_{i}$ be the line passing through $p_i$ and $q_i$. Since $s$ and $t$ are on different guiding lines, $P$ must switch from one guiding line to the other using one of these vertical lines $\ell_1,\ldots,\ell_{n}$.
 
Assume that $P$ starts at $s$, travels towards $p_k$, for some $k$ with $1\le k\le n$, and then switches the guiding line, as highlighted in red. Then the length of the path is
\begin{align*}
&~2|p_1p_k|_x -|p_{k-1}p_k|_y  + (|p_kq_k|_y-|p_{k-1}p_k|_y-|q_kq_{k+1}|_y) + 2|q_kq_n|_x - |q_kq_{k+1}|_y  \\
&=~2|p_1p_k|_x -\varepsilon  + |sq_1|_y -2\varepsilon + 2|q_kq_n|_x -\varepsilon  \\
&=~2|p_1p_k|_x  + 4h/3 + 2|p_kp_n|_x - 4\varepsilon  \\
&=~2|st|_x  + 4h/3 - 4\varepsilon \\
&=~2h  + 4h/3-4\varepsilon \\
&=~10h/3-4\varepsilon.   
\end{align*}
Here $\varepsilon = |p_1p_2|_x$ becomes arbitrarily small as $n$ approaches infinity. 

Hence for sufficiently large $n$, the spanning ratio is at least $\frac{(10h /3)-4\varepsilon}{\sqrt{h^2 + (h/3)^2}} = \frac{(10h /3)-4\varepsilon}{\sqrt{10}h/3}   \approx \sqrt{10}$.

%\textbf{Case 2 ($k=2$):} In this case,  the height of the bounding box  is larger than its width by a small positive constant $\varepsilon$, and the spanning ratio is realized between $s$ and $t$. It is straightforward to observe that the spanning ratio is at least $\frac{ |sp|+|pt|}{(|sa|+|bt|)+\varepsilon} = \frac{ |sp|+|pt|}{|sp|\sqrt{2}+\varepsilon}\approx \sqrt{2}$.

\textbf{Case 2 ($k\geq 2$):} 
We place four points $s,p,t$ and $q$ at the corners of a square in a clockwise order; let $S$ denote the square made by points $s,p,t$ and $q$. By definition of emanation graphs, $q$ has exactly $(2^{k+1}-4)/4 = 2^{k-1}-1$ rays strictly between its upward and rightward rays. Since this is an odd number of rays, the ray in the median position will hit $p$.

We then move $p$ and $q$ towards each other along the diagonal each by a small positive constant $\varepsilon$ and then perturb by a small positive constant $\varepsilon'$ such that they do not remain along the diagonal. 
Figures~\ref{fig:lb}~(right) illustrates such a scenario.  %Note that  $s$ is equidistant from both $p$ and $q$. We  prioritize the rays of $p$ and $q$ over the rays of $s$ while tie breaking.  Therefore,
Assuming $\varepsilon' < \varepsilon/\sqrt{2}$, the points $p$ and $q$ lie in the square $S$, and therefore, the rays of $s$ are blocked by the rays of $p$ and $q$. Similarly, the rays of $t$ are blocked by the rays of $p$ and $q$. Consequently, the shortest path between $s$ and $t$ must visit either $p$ or $q$, which results in a spanning ratio of $\sqrt{2}$.
     \qed 
\end{proof}

%%%%%%%%%%%%%%%%%%%%%%%%%%%%%%%%%%%%%%%%%%%%%%%%%%%%%%%%%%%%%%%%%%%%%%%%%%%%%
%%%%%%%%%%%%%%%%%%%%%%%%%%%%%%%%%%%%%%%%%%%%%%%%%%%%%%%%%%%%%%%%%%%%%%%%%%%%%
%%%%%%%%%%%%%%%%%%%%%%%%%%%%%%%%%%%%%%%%%%%%%%%%%%%%%%%%%%%%%%%%%%%%%%%%%%%%%
%%%%%%%%%%%%%%%%%%%%%%%%%%%%%%%%%%%%%%%%%%%%%%%%%%%%%%%%%%%%%%%%%%%%%%%%%%%%%

\section{Simplification for Emanation Graphs of Grade Two}

An emanation graph of grade 2 has twice the number of edges than its grade 1 counterpart, i.e., for $n$ points, there are $8n$ rays and hence  at most $8n$ Steiner points. But most of these edges are redundant. For example, it is common to find two paths of shortest length between a pair of vertices, \emph{e.g.} $p_1$ and $p_3$ in Figure~\ref{fig:simplified2}. Here we propose a  simplification technique that attempts to remove such redundancies. We refer to the resulting graph as a \emph{Simplified Emanation Graph (SEG)} of grade 2. 

\begin{figure}[h]
\centering
\includegraphics[width=.7\textwidth]{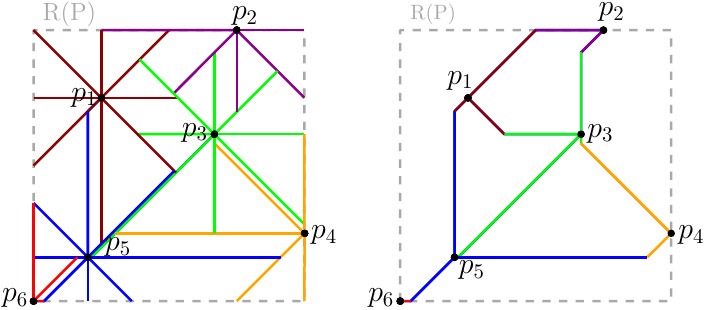}
\caption{(left) An emanation graph of grade 2 and (right) its  simplified version.}
\label{fig:simplified2}
\end{figure}
%p_s
%Our method is basically based
\subsection{Overview of the construction of SEG for $k=2$}
Let $G$ be an emanation graph on $n$ points with at most $8n$ Steiner points where $\beta$ of them are on the bounding box. The idea of constructing   a simplified emanation graph   is to iterate over the points and connect each point $p$ to at most 8 other points using exactly one   Steiner point per connection. The points we connect $p$ to are guaranteed to be the neighbours of $p$ in the original emanation graph. However, if we connect $p$ to a point $q$ in the SEG, then both of their rays stop at the Steiner point (whereas in original emanation graph the shorter ray would continue).  Since a Steiner point blocks a pair of rays, the number of Steiner points in SEG decreases to $(8n-\beta)/2+\beta = 4n+\beta/2$. %We iterate on the points. 
 
Fix some point $p$, and assume that  $k=2$. The idea of choosing at most $8$ points for $p$ is as follows.  %Let $p$ be the current point. 
Consider  $2^{k+1} = 8$  bisectors around $p$, where each bisector  bisects an angle defined by two consecutive rays originated at $p$. For each cone $C$ determined by two consecutive bisectors, we find a point $p_k$ in $C$ such that a ray $r$ of $p_k$ must touch a ray  $r'$ of $p$ irrespective of the position of the other points in $C$. We call $p_k$ the \emph{key vertex} in cone $C$ but do not create the connection between $p$ and $p_k$ immediately. The reason is that a point outside of $C$ may interfere and block $r$ or $r'$. We first select a set of \emph{candidate vertices} based on some simple distance measure from $p$, who have the potential to block the connection between $p$ and $p_k$. We then check whether any of these candidate vertices can block the connection between $p$ and $p_k$. If not, then we make the connection between $p$ and $p_k$ by creating a single Steiner point. If we can connect $p$ and $p_k$ in this process then we refer to $p_k$ as a \emph{correct neighbour} of $p$.
 
%We then find a `key' point of $p$ in each  cone, where a cone is determined by two consecutive bisectors. We then check whether $p$ and $q$ can be connected through a single Steiner points, \emph{i.e.}, whether a ray of $p$ can reach a ray of $q$. %directions.
 %Although this appears to be similar to the construction of $\Theta$-graphs, there are also significant differences in the technique for finding appropriate sweep lines.  %Let $p$ be the current vertex and let $q$ be its nearest neighbor in a cone. % directions. Then %After selecting this nearest neighbor,
% There may be scenarios where none of the rays of $p$ can reach a ray of $q$ because of the interference by points lying in other cones. In such scenarios, we do not create any new Steiner point or segment. Otherwise, if a ray of $v$ can reach a ray of $p$ without any interference, then we construct a Steiner point $r$ at their intersection point and create the segments $rp$ and  $rq$.
 %Since the key point $q$ may not aways be connected to $p$ through a single Steiner point, we will refer to $q$ as a  \emph{candidate point}. 
%If we can determine that $p$ and $q$ can be connected through a single Steiner point, then we refer to $q$ as a \emph{correct neighor} of $p$.     
% We check whether any of the rays of $v$ can reach to a ray of $p$ or not. Note that there may be vertices in other cones none of the rays of $v$ can reach the rays of $p$
 %they can connect or that their connection is somehow interfered by a ray of another vertex. 

\subsection{Detailed Construction}
Fix a point $p$. While describing the search for a correct neighbor of $p$ with respect to a fixed cone $C$, % instead of iterating on directions,  
 we rotate the plane %by $(\frac{\pi}{2^k})$-degrees at each step, and then find a proper \textit{top neighbor} for each vertex. 
 such that the cone $C$ appears to be vertically upward. For the ease of explanation, the rightward ray of a vertex is labeled $r_1$ and its other rays are numbered counter-clockwise (see Figure~\ref{fig:psselection}).   We denote the emanated rays by $r_1,r_2,r_3,...$ and their angular bisectors by  $b_1,b_2,b_3,b_4$, respectively. %o sweep appropriate regions (cones) while searching for $p_k$ and $p_c$. 
We use the notation $C_{b_ib_j}$ to refer to the cone shaped region bounded by $b_i$ and $b_j$, and  denote by $l_{g}$ a sweep line orthogonal to the bisector $g$, starting from $p$.

During the computation of the neighbours of $p$, we will  refer to two important vertex types: key vertex ($p_k$) and candidate vertex ($p_c$); we will add an edge between $p$ and $p_k$ if there is no interference by candidate vertices $p_c$.

\textbf{Key vertex of $p$:} We define the \emph{key vertex} $p_k$ to be the  first vertex found sweeping up $p$'s top cones $C_{b_2r_3}$ and $C_{r_3b_3}$.  Figure~\ref{fig:psselection}~(left) illustrates a scenario, where two sweep lines $l_{b_2}$ and $l_{b_3}$, orthogonal to $b_2$ and $b_3$, respectively, are used simultaneously to sweep $C_{b_2r_3}$ and $C_{r_3b_3}$.  Note that  a single horizontal sweep line may not hit the correct neighbor $p_k$ to be connected to $p$, \emph{e.g.} the first point $q$ hit by the horizontal sweep line may be a vertex near $p_k$ in the same cone and one of the downward rays of $p_k$ may block the connection between $q$ and $p$ (contradicting that $q$ is the correct neighbor).  Figure~\ref{fig:psselection}~(right) illustrates an example for such cases.

\begin{figure}[h]
\centering
\includegraphics[width=.35\textwidth]{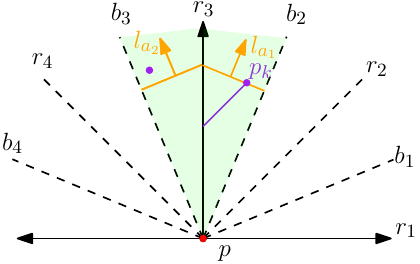}
\includegraphics[width=.35\textwidth]{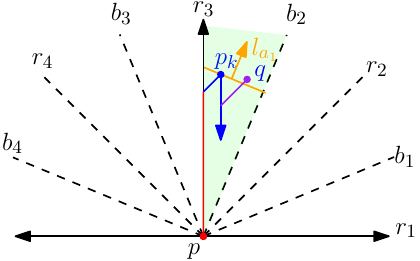}
\caption{(left) Illustration for the selection of $p_k$. Both sweep lines start at the same time from $p$ and stop as soon as one finds a vertex $p_k$. (right) An example, where a successful connection between $p$ and $p_k$ has been made, but a horizontal sweep cannot find $p_k$.}
\label{fig:psselection}
\end{figure}

\textbf{Candidate vertex of $p$:} We now consider vertices that may potentially block the connection between $p$ and $p_k$. We will impose some constraints to speed up the search. We use sweep lines with angles specific to each cone $C$ to find such {\it candidate vertex $p_c$ of $C$}. Figure~\ref{fig:pcselection} illustrates the sweep lines for each cone. The angle of the sweep line is chosen in a way so that the first vertex hit by the sweep line  wins  the competition, of reaching $p$'s connection to $p_k$, among all the points in $C$. Thus the first vertex hit by the sweep line of a cone is called the \emph{candidate vertex} $p_c$ of that cone.  The candidate vertices found by the sweep lines may block a ray of $p_k$ (see Figure~\ref{fig:pcselection}~(left)) or the upward ray of $p$ (see Figure~\ref{fig:pcselection}~(right)).  %There can only be four candidate vertices, one in each of the  the four cones $C_{b_1r_2},C_{r_2b_2},C_{b_3r_4},$ and $C_{r_4b_4}$. $C_{r_1b_1}$ and $C_{b_4r_5}$ and their vertices are skipped as no vertex in these areas can block $p$'s connection to $p_k$.
\begin{figure}[h]
\centering
\includegraphics[width=.8\textwidth]{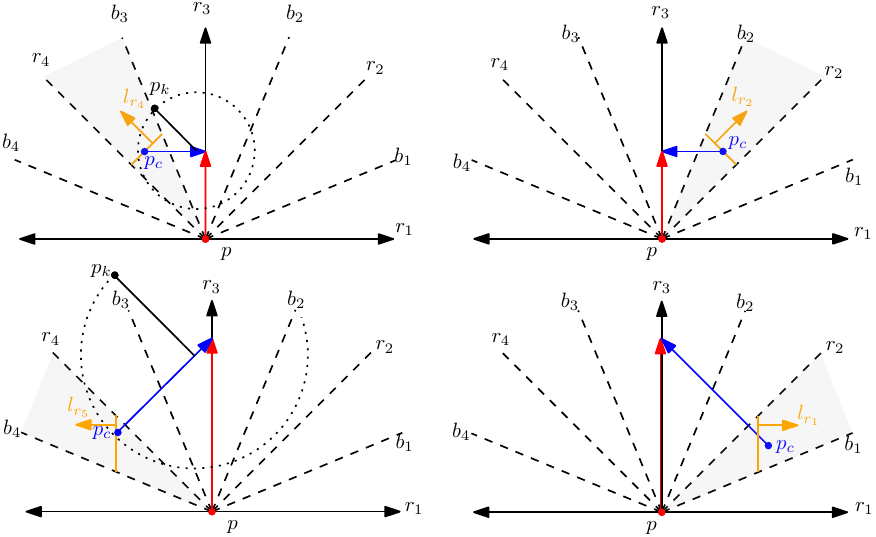}
\caption{Sweep lines used to select $p_c$ in each cone around $p$, drawn in yellow color. A sweep line starts from $p$ and stops upon finding a vertex. The dotted circles centered at the intersection point of the rays of $p_k$ and $p_c$  illustrate that $p_c$ is closer to the point of intersection than $p_k$. }
\label{fig:pcselection}
\end{figure}

We now show that to block the downward ray of $p_k$ towards $p$ or to block the upward ray of $p$, a candidate point must lie in the wedge determined by the bisectors $b_1$ and $b_4$. Thus there can only be four candidate vertices, one in each of the  four cones $C_{b_1r_2},C_{r_2b_2},C_{b_3r_4},$ and $C_{r_4b_4}$.

Without loss of generality let $q$ be a point on $r_4$. Without any interference, the rightward  ray of $q$ and the upward ray of $p$ would have the same length when they meet (e.g., see Figure~\ref{newfig}~(left)). Similarly, let $q'$ be a point on $b_4$. The ray with slope $+1$ (north-eastern) at $q'$ and the upward ray of $p$ will have the same length, i.e.,   if $o$ is the point of intersection, then $\angle opq'  (= 67.5^\circ) = \angle oq'p (= 180^\circ -45^\circ - 67.5^\circ$). Hence to block the upward ray of $p$, a point must lie in the wedge determined by the bisectors $b_1$ and $b_4$. 

\begin{figure}[h]
\centering
\includegraphics[width=\textwidth]{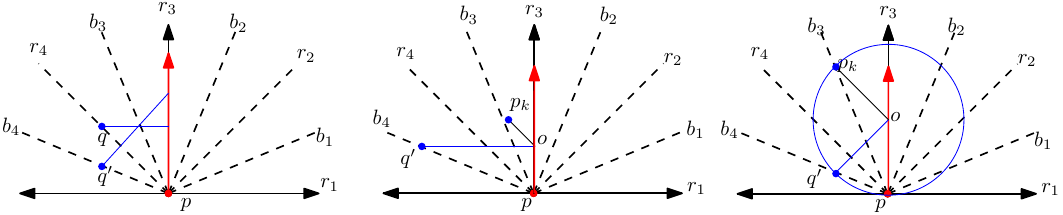}
\caption{Illustration for the  location to be searched for  candidate points.}
\label{newfig}
\end{figure}

Consider now the rightward ray of $q'$ and the downward ray (south-eastern) of $p_k$ with slope -1 (e.g., see Figure~\ref{newfig}~(middle)). Let $o$ be the point of intersection. Then $\angle oq'p_k  = 135^\circ - \angle op_kq'$. Since  $\angle op_kq'  > 90^\circ$,  the rightward ray of $q'$ must be larger than the ray of $p_k$. Finally, consider the upward ray (north-eastern) of $q'$ with slope +1 and the downward ray of $p_k$ with slope -1 (e.g., see Figure~\ref{newfig}~(right)). Let $o$ be the point of intersection. If the ray $q'o$ blocks the ray of $p_k$, then $p_ko$ must be of at least the same length as $q'o$. The length of $p_ko$ is maximized when $o$ is on the upward ray of $r$, where   the length of $p_ko$ becomes equal to the length of $q'o$. Hence to block the downward ray of $p_k$, a point must lie in the upward  wedge determined by the bisectors $b_1$ and $b_4$.

 %a possible group of many interfering vertices near itself in the same cone.
%\todo{what is the blue ray here? $p_c$ can have many rightward ray - why show only one?}
%ensures that its ray is not interfered by another point inside this cone.

% reduce the number of  candidate vertices. For example, we can restrict our attention to the vertices %To find whether $p_k$ should be connected to $p$, we need to check whether there is a vertex
% $p_c$ with ${|p|}_y \leq {|p_c|}_y < {|p_k|_y}$. % whose ray reaches the potential connection between $p_k$ and $p$ faster than that of the rays of $p_k$ and $p$. If so, then $p_k$ and $p$ should  not be connected through a single Steiner point. %interferes their connection.
% The notation ${|p|}_y$ refers to the $y$-coordinate of vertex $p$. However, we will define a specific vertex at each of % now show how to examine each candidate $p_c$  %should be checked
%  in the four cones $C_{b_1r_2},C_{r_2b_2},C_{b_3r_4},$ and $C_{r_4b_4}$ as a candidate vertex. % separately, as one would interfere the connection to $p$ independent of other vertices.
  % in time. % and interfere them.

Depending on the geometric  properties of every vertex $p_c$ in a cone of $p$, some ray of $p_c$ is the most \emph{competent} (Figure~\ref{fig:sw}), meaning that it has the chance to block the connection between $p$ and $p_k$. For example, for  vertex $p_c\in C_{b_4r_4}$, the north-eastern ray $r_2$ may interfere with $p_k$, thus to find the most competent vertex inside $C_{b_4r_4}$ we use a vertical sweep line $l_{r_5}$ starting from $p$. Any point $r$, to the left of the  sweep line $l_{r_5}$ through $p_c$ inside $C_{b_4r_4}$ must have a longer ray to reach the ray of $p_k$, so it cannot block the ray of $p_k$. 
 %In other words, if $p_c$ is the correct  neighbor to be connected to $p$, then to reach the ray of $p$,  any subsequent point in the cone will need to have a longer ray than that of $p_c$. The same method applies to the other cases. %, demonstrated in the figures too.

\begin{figure}[h]
\centering
\includegraphics[width=.8\textwidth]{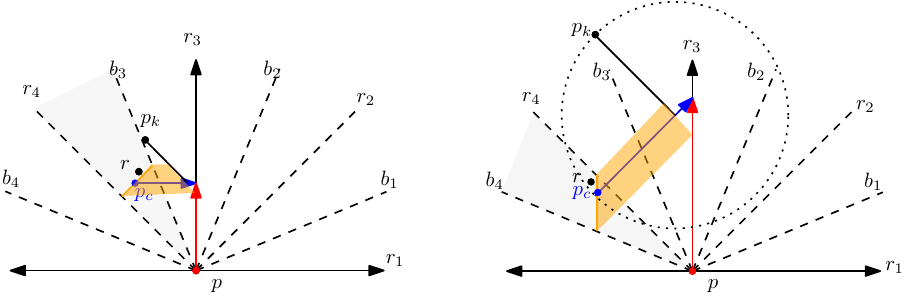}
\caption{Illustration for sweep lines. The dotted circle centered at the intersection point of the rays of $p_k$ and $p_c$  illustrates that $p_c$ is closer to the point of intersection than $p_k$.}
\label{fig:sw}
\end{figure}

After finding our candidate $p_c$ vertices, we check for some more special conditions in order to know whether they can block the connection between $p_k$ and $p$. % or that they fail,  
 These conditions are thoroughly explained later. After the check, if a ray of $p$ can be connected to a ray of $p_k$ through one Steiner point, then we first \emph{check} whether they already have a common Steiner point neighbor. If not, we add a new Steiner point at the intersection of their rays, otherwise, we use the existing Steiner point.
%For every vertex $p$ and each rotation, we find $p_k$ and a list of possibly interfering vertices $p_c$ using the selection methods provided above. 
 %During these iterations we skip pairs that are already connected, therefore, if $p$ is already connected to $p_k$, we do not check if $p_k$ can connect to $p$.
%\zahed{Thus this checking halves the total number of edges and Steiner points in SEG, \emph{i.e.}, the number of Steiner points in SEG of grade two is at most $2n+\beta/4$.}
%\jyoti{I do not think this is correct because the argument that I made initially is based on the simple fact that one Steiner point stops two rays. Since we are not double counting in this argument, this check will not help.}
% by avoiding redundant paths between two connected vertices in the emanation graph Theorem~\ref{th:steiner} provides a bound on total Steiner points in a SEG.
%
\begin{comment}
\begin{figure}[h]
\centering
\includegraphics[width=0.9\textwidth]{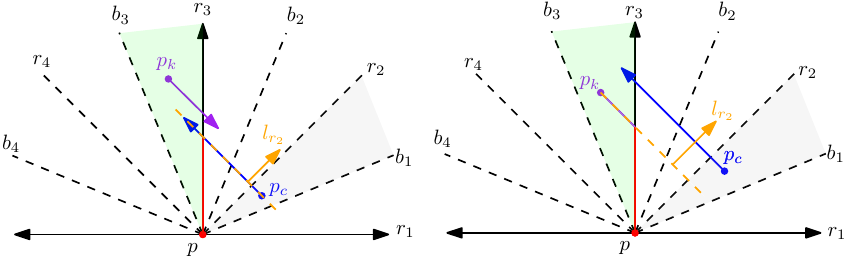}
\caption{Left depicts the case where $p_c$ has interfered, right shows a successful connection.}
\label{fig:type3}
\end{figure}
\begin{figure}[H]
\centering
\includegraphics[width=0.9\textwidth]{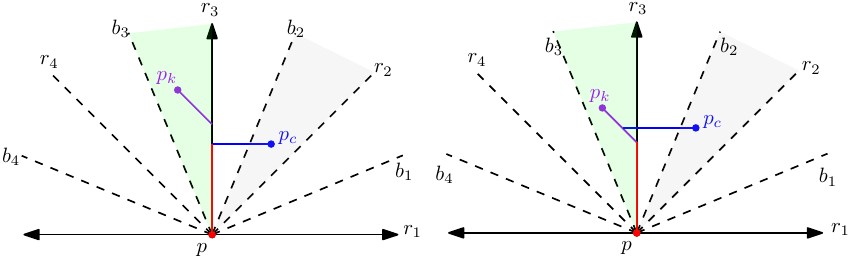}
\caption{Left depicts the case where $p_c$ has interfered, right is a successful connection.}
\label{fig:type4}
\end{figure}
\end{comment}

Assuming $p_c$ lies on the right side of $r_3$, there are four cases we need to distinguish to determine whether $p_c$ interferes with the connection between $p_k$ and $p$:
\begin{itemize}
    \item Case 1: $p_k\in C_{b_2r_3}$ and $p_c\in C_{b_1r_2}$; see Figure~\ref{fig:type1}.
    \item Case 2: $p_k\in C_{b_2r_3}$ and $p_c\in C_{r_2b_2}$; see Figure~\ref{fig:type2}.
    \item Case 3: $p_k\in C_{r_3b_3}$ and $p_c\in C_{b_1r_2}$; see Figure~\ref{fig:type3}.
    \item Case 4: $p_k\in C_{r_3b_3}$ and $p_c\in C_{r_2b_2}$; see Figure~\ref{fig:type4}.
\end{itemize}

\begin{figure}[h]
\centering
\includegraphics[width=\textwidth]{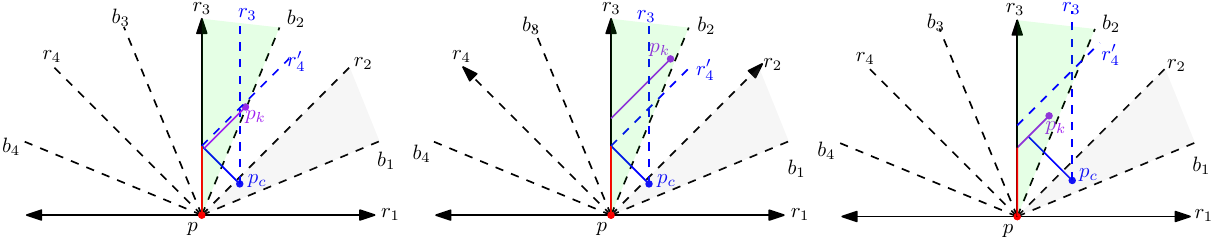}
\caption{Case 1: Left and middle depict two different cases where $p_c$ has interfered, right shows a successful connection between $p_k$ and $p$.}
\label{fig:type1}
\end{figure} 
\begin{figure}[h]
\centering
\includegraphics[width=0.8\textwidth]{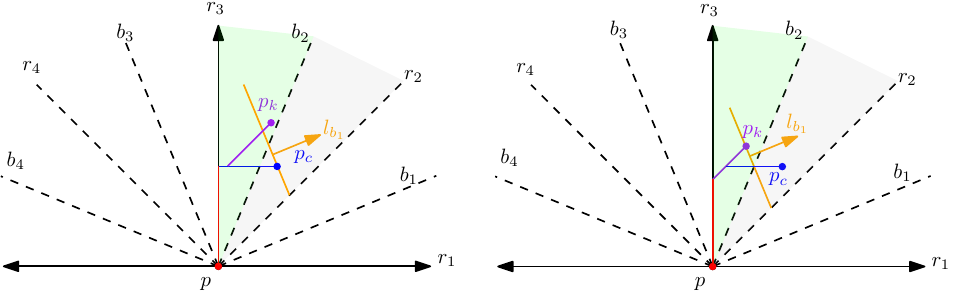}
\caption{Case 2: Left depicts $p_c$ has interfered, right shows a connection between $p_k$ and $p$.}
\label{fig:type2}
\end{figure}
\begin{figure}[h]
\centering
\includegraphics[width=0.8\textwidth]{Type32.pdf}
\caption{Case 3: Left depicts $p_c$ has interfered, right shows a connection  between $p_k$ and $p$.}
\label{fig:type3}
\end{figure}
\begin{figure}[h]
\centering
\includegraphics[width=0.8\textwidth]{Type42.pdf}
\caption{Case 4: Left depicts $p_c$ has interfered, right shows a connection  between $p_k$ and $p$.}
\label{fig:type4}
\end{figure}

%If $p_c$ does not fall into any of these conditions, $p$ can not be connected to $p_k$. 
%%%J: I commented out this line --hard to read
Figures~\ref{fig:type1}--\ref{fig:type4} illustrate examples for each case, where the rightmost section in each figure depicts the case when $p_k$ can successfully connect to $p$. Let $|p|_x$ (resp. $|p|_y$) be the $x$ (resp. $y$)-coordinate of the point $p$.

Let $r'_4$ be the continued refraction of $r_4$ of $p_c$ after hitting $r_3$ of $p$. In Case~1, if $|p_k|_x < |p_c|_x$ and $p_k$ is below $r'_4$, the south-western ray of $p_k$ reaches to $r_3$ sooner than the north-western ray of $p_c$. Therefore, $p_c$ cannot interfere with the connection between $p_k$ and $p$; see Figure~\ref{fig:type1}~(right). In this case, $p_c$ could block the south-western ray of $p_k$ (as shown in Figure~\ref{fig:type1}~(left)) or the upward ray of $p$ (as shown in Figure~\ref{fig:type1}~(middle)).

In Case~2, if $p_k$ is swept before $p_c$ by the sweep line $l_{b_1}$, the south-western ray of $p_k$ reaches to $r_3$ sooner than the western ray of $p_c$. This implies that $p_k$ should connect to $p$; see Figure~\ref{fig:type2}.

In Case~3, if $p_k$ is swept before $p_c$ by a sweep line $l_{r_2}$, the south-eastern ray of $p_k$ blocks $r_3$ before the north-western ray of $p_c$ reaches there, so $p_c$ cannot interfere with the connection between $p_k$ and $p$; see Figure~\ref{fig:type3}.

In Case~4, if $|p_k p_c|_y < |p_k p|_x$, the south-eastern ray of $p_k$ blocks $r_3$ before the western ray of $p_c$ reaches there; therefore, $p_k$ connects to $p$; see Figure~\ref{fig:type4}.

%
%Graphical explanations are provided for each case in 

Explaining the cases where $p_c$ is on the left side of $r_3$ is straightforward, as every condition needs to be vertically mirrored, relative to $p$. 
\begin{comment}
We thus describe the conditions regarding these mirrored cases without any additional figure. %extra graphical explanations:
\begin{enumerate}
    \item $p_k\in C_{r_3b_3}$ and $p_c\in C_{b_4r_4}$ and $|p_k|_x > |p_c|_x$ and $p_k$ is below $r'_2$ of $p_c$: continued refraction of $r_2$ of $p_c$ after hitting $r_3$ of $p$.
    \item $p_k\in C_{r_3b_3}$ and $p_c\in C_{r_4b_3}$ and $p_k$ is swept before $p_c$ by a sweep line orthogonal to $b_4$ of $p$.
    \item $p_k\in C_{b_2r_3}$ and $p_c\in C_{b_4r_4}$ and $p_k$ is swept before $p_c$ by a sweep line orthogonal to $r_4$ of $p$.
    \item $p_k\in C_{b_2r_3}$ and $p_c\in C_{r_4b_3}$ and $|p p_k|_x < |p_c p_k|_y$.
\end{enumerate}
\end{comment}

Figure~\ref{fig:construction} demonstrates all the 8 steps (with rotations) described above using the point set used in Figure~\ref{fig:simplified2}. After stacking blue segments after rotating them back to their starting direction results into the  simplified version of the emanation graph.
\begin{figure}[h]
\centering
\includegraphics[width=1\textwidth]{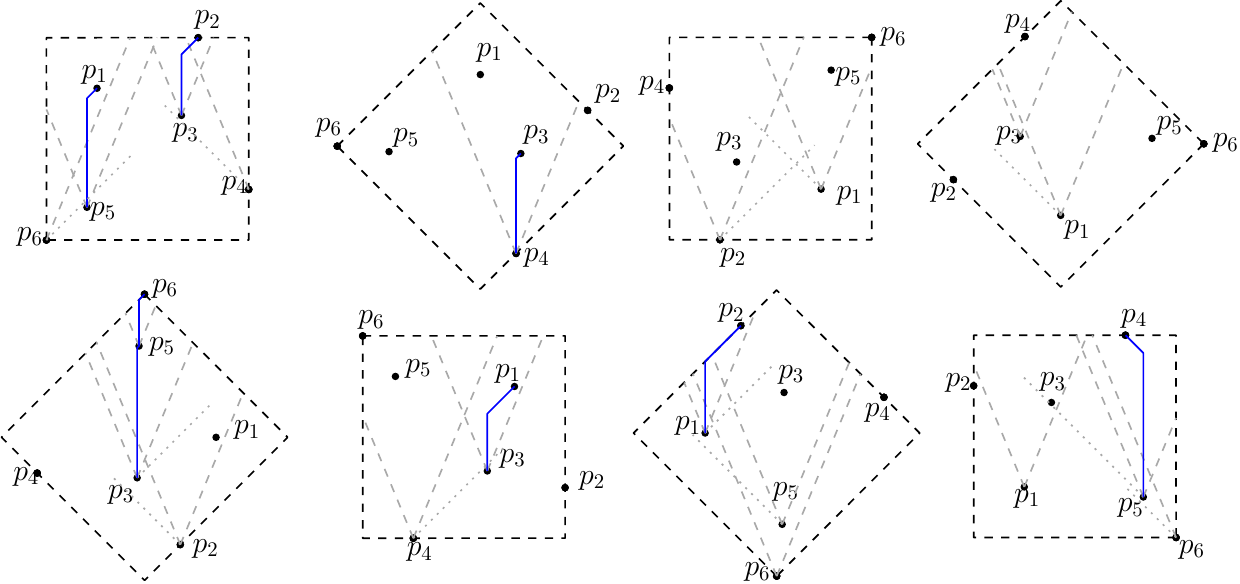}
\caption{Construction steps of an example SEG, each figure represents one of 8 required rotations. Blue segments are rotated back and accumulated to form the final graph, depicted in the right section of Figure~\ref{fig:simplified2}.}
\label{fig:construction}
\end{figure}

\section{The Construction Algorithm}
In the following section we discuss a few properties of SEG. %that SEG provides as a spanner. 

\begin{lemma}
\label{th:time}
A SEG on a set of $n$ points can be constructed in time $O(n\cdot \emph{polylog}(n))$. 
\end{lemma}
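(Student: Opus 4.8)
The plan is to show that each of the $2^{k+1}=8$ rotation phases of the construction runs in $O(n\cdot\mathrm{polylog}(n))$ time, and that assembling the selected segments into the final subdivision (which creates the Steiner points) costs no more. Fix one phase. For every vertex $p$ the work consists of: (i) locating the top neighbour $p_s$ by advancing the two orthogonal sweep lines $l_{a_1},l_{a_2}$ inside the cones $C_{a_1r_3}$ and $C_{r_3a_2}$; (ii) locating, in each of the four relevant cones $C_{b_1r_2},C_{r_2a_1},C_{a_2r_4},C_{r_4b_2}$, the unique \emph{first} candidate $p_c$ with respect to the cone-specific sweep direction; and (iii) running the constant number of coordinate comparisons listed in the eight itemised cases on the $O(1)$ candidates produced in (ii). Since (iii) is $O(1)$ per vertex, the lemma reduces to supporting the geometric queries in (i) and (ii) in polylogarithmic time per vertex after near-linear preprocessing.

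First I would observe that all cone boundaries that ever occur — the emanated rays $r_1,\dots,r_8$ and their bisectors $a_1,a_2,b_1,b_2$ — point in one of only $16$ fixed directions (multiples of $22.5^\circ$), and this stays true across all $8$ rotations. Hence each query in (i) and (ii) has the form: \emph{among the points lying in a wedge bounded by two of these fixed directions, report the one that is extreme with respect to a third fixed direction} (or, for the simultaneous double sweep in (i), with respect to the minimum of two fixed linear functionals, i.e.\ ray shooting from $p$ inside a wedge against the lower envelope of the two sweep lines). After rotating the point set into the $O(1)$ relevant coordinate frames, every such query decomposes into a constant number of two- or three-sided orthogonal range-extreme (dominance) queries. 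These are answered by priority search trees / layered range trees built once per frame in $O(n\log n)$ time, each query taking $O(\log n)$ time with fractional cascading — the same toolbox (and the same $O(n\log n)$ sweep-line bound) used for $\Theta$-graphs; see~\cite{DBLP:journals/comgeo/BoseS13}. Summed over all $n$ vertices and the $8$ phases, this is $O(n\log n)$.

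It then remains to turn the $O(n)$ selected open segments $\{pp_s\}$ into the actual simplified graph: to compute where each selected segment is truncated by the rays of other vertices (the ``shorter ray stops the longer'' rule) and where the surviving pieces cross, thereby creating the Steiner points. Because SEG is built from scratch, the full grade-$2$ emanation graph is never materialised; we hold only the $O(n)$ chosen segments, each collinear with one of the $16$ fixed directions. Their pairwise truncations and crossings form an arrangement with $O(n)$ vertices by Theorem~\ref{th:steiner}, so a Bentley--Ottmann-style plane sweep (or a motorcycle-graph-style priority sweep exploiting the $O(1)$ slopes) computes the entire overlay in $O((n+I)\log n)=O(n\log n)$ time, where $I=O(n)$ is the number of crossings. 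Combining the three parts yields the claimed $O(n\cdot\mathrm{polylog}(n))$ bound.

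The step I expect to be the main obstacle is (ii), together with its correctness-driven consequence: arguing that for each cone it genuinely suffices to inspect only the \emph{first} candidate $p_c$, so that the number of candidate checks is $O(1)$ per vertex rather than $\Theta(n)$, and then matching each itemised geometric condition (e.g.\ ``$p_s$ is below the refracted ray $r'_4$ of $p_c$'', or ``$|p_sp|_x<|p_sp_c|_y$'') to a query that one of the fixed-frame range structures can actually answer. The ``minimum of two linear functionals'' wedge query required by the simultaneous double sweep in (i) is the other delicate point, but since it still reduces to $O(1)$ dominance queries it does not affect the asymptotics.
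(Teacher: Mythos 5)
Your proposal follows essentially the same route as the paper's proof: a constant number of cones per vertex, each cone-extreme query answered by an augmented (range-tree-style) structure in polylogarithmic time after near-linear preprocessing, followed by $O(1)$ interference comparisons per vertex. You are somewhat more thorough than the paper --- in particular you explicitly account for the final plane-sweep assembly of the selected segments into the subdivision with its Steiner points, a step the paper's proof leaves implicit --- but this does not change the approach or the asymptotics.
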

\begin{proof}
For each point $p$, there exist a constant number of cones, and for each cone we need to find a candidate point with the smallest coordinate along some axis. This can easily be done by using a constant number of {\it $2$-dimensional range trees} each is corresponding to a cone, which can be constructed in time $O(n\cdot \emph{polylog}(n))$ (Theorem 5.9 in~\cite{Berg2008txtbk}). At each internal node $v'$ of the second-level trees ${\cal T}_{assoc}(v)$, we store the point with the smallest coordinate along the axis among the points in $P(v')$, where $v$ is an internal node of the first-level tree ${\cal T}$ and $P(v')$ is the set of points stored at the leaves of the sub-tree rooted at $v'$.
To find the point with the smallest coordinate along the axis of some cone, we can easily query the corresponding range tree in time  $O(\emph{polylog}(n))$.

After finding the candidates $p_c$ in the cones of each point $p$, we do a constant number of comparisons with $p_k$ in order to check whether $p_c$ has interfered the connection between $p$ and $p_k$. Therefore, the total construction time is $O(n\cdot \emph{polylog}(n))$.
\qed
\end{proof}

Forming an emanation graph of grade $k=2$ involves shooting $2^{k+1}$ rays from each vertex simultaneously. This results into a maximum degree of $8$ and $8n$ rays in any graph and $8n$ maximum number of Steiner points. Any pair of selected vertices $(p,p_k)$ in an emanation graph, falls in one of four categories:
\begin{enumerate}
\item They are not connected to each other through a single Steiner point, %by edges of their own, 
 because other vertices have completely interfered their connection; see $(p_1,p_4)$ in Figure~\ref{fig:simplified2}.
\item They are connected by two mirrored paths of two edges; %shortest length; 
 see $(p_1,p_5)$ in Figure~\ref{fig:simplified2}.
\item They are connected by a path of two edges, %shortest length similar to Item 2, 
 and another path of longer length. The second path is formed due to interference of a ray from $p_c$ (i.e., $p_1$), thus involving an edge belonging to $p_c$; see $(p_3,p_5)$ in Figure~\ref{fig:simplified2}.
\item They are connected by a path of  two edges, but neither Category 2 nor Category 3 are satisfied; see $(p,t)$  in Figure~\ref{fig:ub}~(right). 
\end{enumerate}
A simplified emanation graph will reduce paths of categories 2 and 3, and thus will reduce Steiner points. Between path pairs of category 2, one is picked arbitrarily and another is omitted. Also for paths of category 3, the one with shorter length remains as the one with longer length is removed. Therefore, it is straightforward to construct examples where the number of Steiner points in a SEG are significantly smaller than the emanation graph (e.g., points on a line with angle of inclination $50^\circ$).

\begin{lemma}
An emanation graph of grade $k$ contains $kn$ Steiner points and there exist point sets where an emanation graph must generate   $kn-O(k)$ Steiner points. Let $G$ be an emanation graph of grade 2 on a set $P$ of $n$ points. Then $G$ contains at most $8n$ Steiner points. Assume that $\beta$ of the Steiner points are on the bounding box. Then a SEG on $P$ will contain at most $4n+\beta/2$ Steiner points. %A SEG on a set of $n$ points  contains at most $4n+\beta/2$ Steiner points, where $\beta$ is the number of rays that hit  the bounding box. 
\end{lemma}
\begin{proof}
Since an emanation graph of grade $k$ contains $kn$ rays and each generates at most one Steiner point, the total number of Steiner points is at most $kn$. To observe that there exist point sets that generate $kn-O(k)$ Steiner points,  first place 4 points along the four corners of a square $R$ and 4 at the midpoint of its edges. We then   place $n-8$ points at its center. We perturb the points at the center to avoid overlap. We thus get $k(n-8)$ rays creating $kn-O(k)$ Steiner points inside $R$.
 
The upper bound on the Steiner points of SEG follows from the observation that every Steiner point that does not lie on the bounding box is the result of two rays hitting each other when both of them stop.\qed
\end{proof}

%However, removal of redundant paths may increase the spanning ratio of the graph. In the light of this discussion, we propose the following open question.

%\bigskip
%\noindent
%\textbf{Open Question 1.} Are SEG a sWhat is the minimum constant $c$ such that SEG contains at most $cn$ Steiner points?

% than that of is at most half of the Steiner points that a corresponding non-simplified emanation graph can have.

%\qed
%\end{proof}

\section{Experimental Comparison}
\label{sec:exp}

In this section we compare SEG with graphs generated with Delaunay triangulation:  constrained~\cite{shewchuk96b} and normal. A normal Delaunay triangulation on a given set of points in general position is defined using the empty circle condition, i.e., three points form a  triangle if and only if the interior of the    circumcircle  does not enclose any point of the pointset.   The constrained Delaunay triangulation~\cite{shewchuk96b} is generated by setting a minimum angle constraint, where Steiner points are added to guarantee all angles to be above the specified constraint.

\begin{figure}[h]
\centering
\includegraphics[width=.9\textwidth]{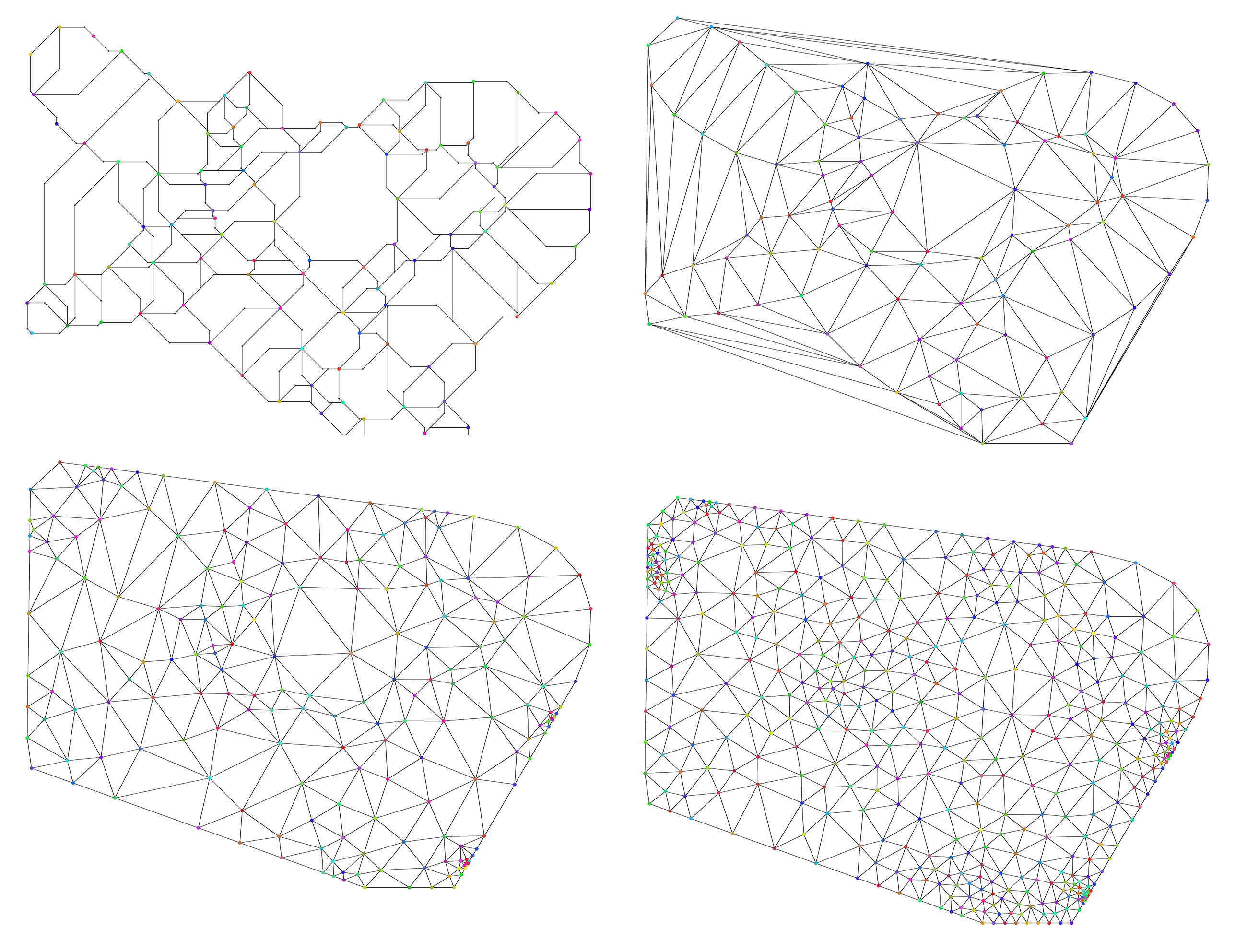}
\caption{A SEG based on our chosen sample of size 100 (top-left). The normal Delaunay triangulation (top-right), 22.5$^{\circ}$ constrained Delaunay triangulation (bottom-left) and 33$^{\circ}$ constrained Delaunay triangulation (bottom-right), all on the same vertex set.}
\label{fig:comparison}
\end{figure}
\begin{figure}[h]
\centering
\includegraphics[width=0.48\textwidth]{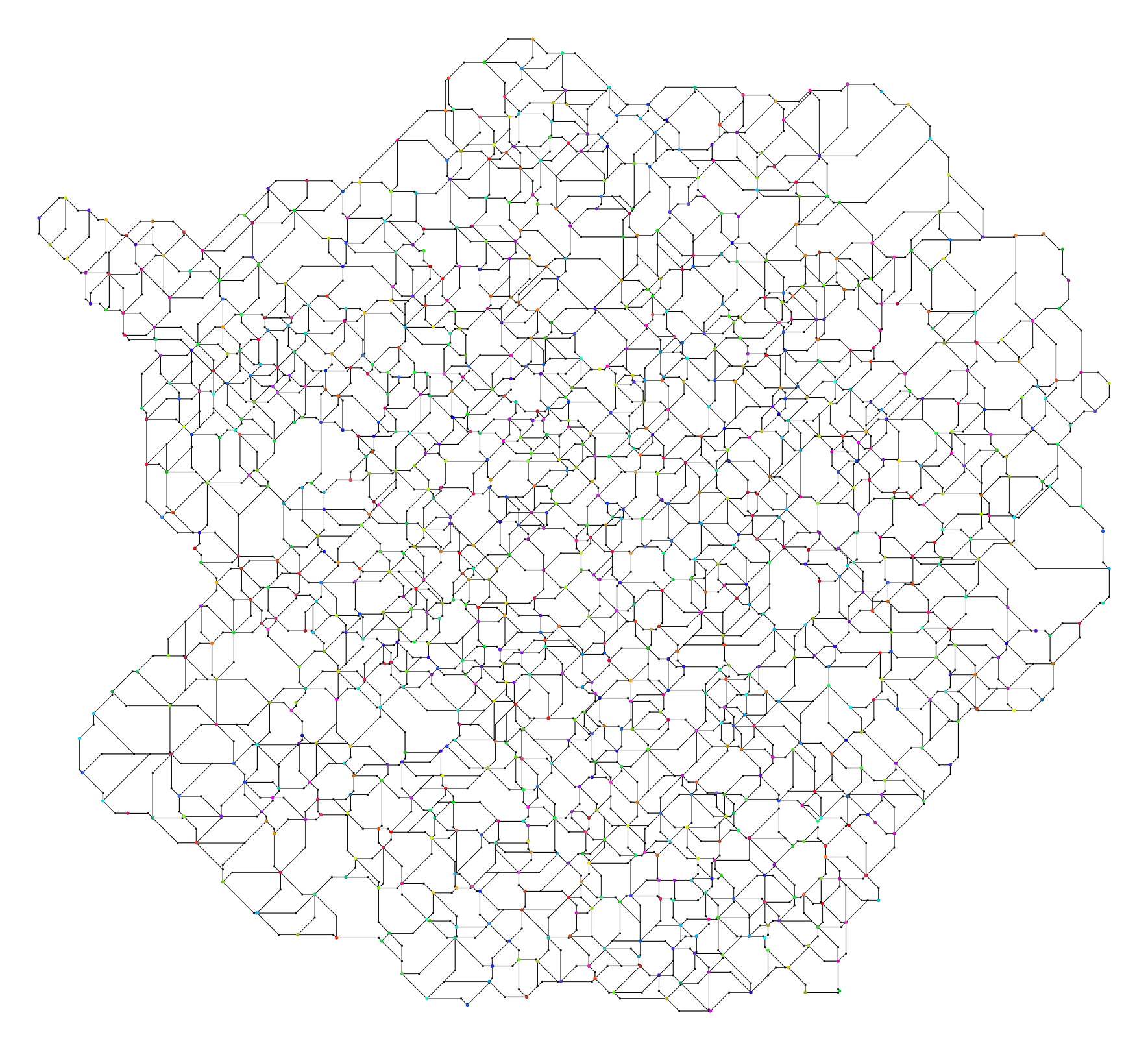}
\includegraphics[width=0.48\textwidth]{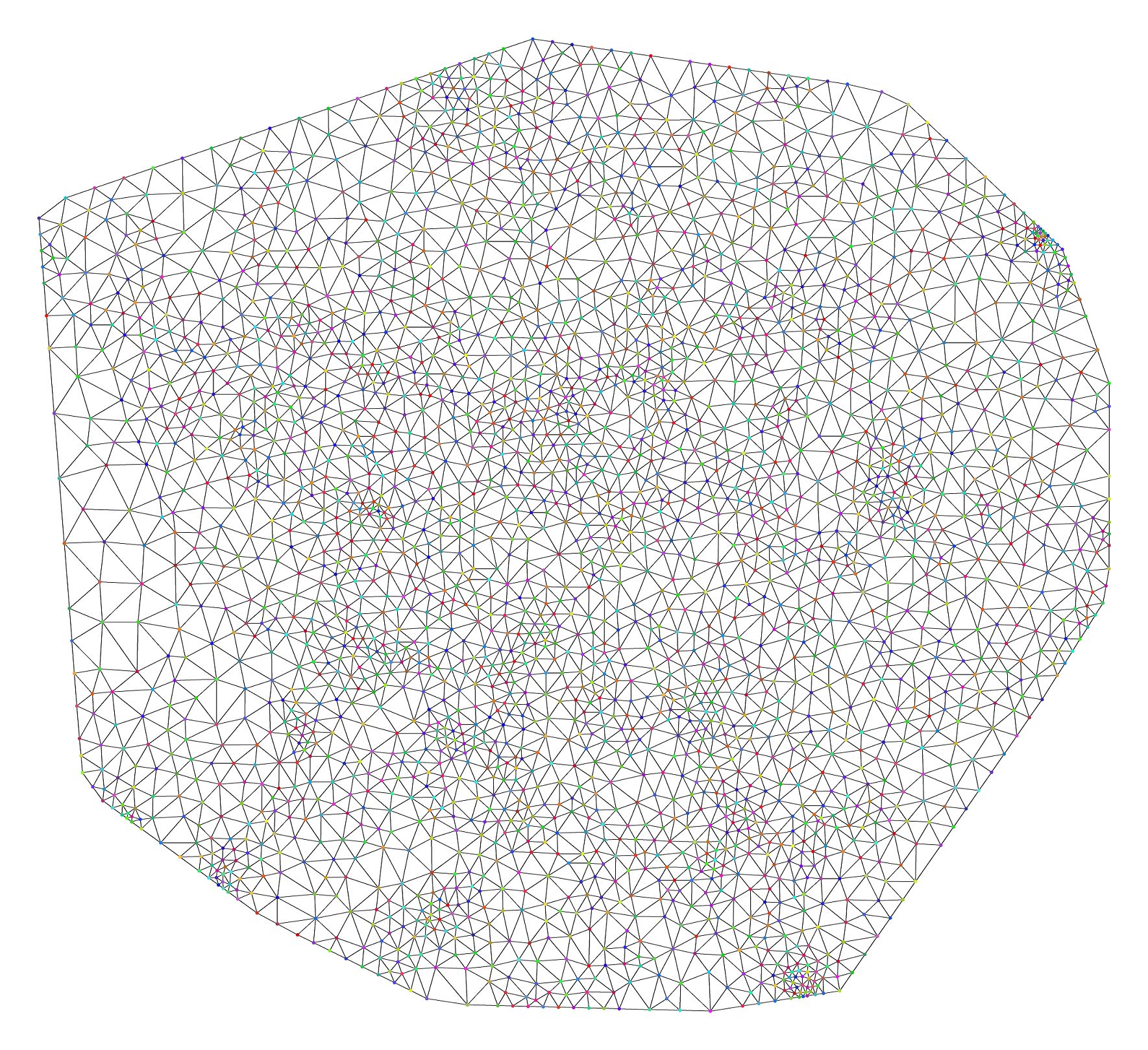}
\caption{(left) A SEG of grade 2 on a sample of size  1000. (right) The corresponding 33$^{\circ}$ constrained Delaunay triangulation.}
\label{fig:comparison3}
\end{figure}

We generated three  datasets (Rand1, Rand2 and Rand3) using NetworkX~\cite{netX}, each containing $1000$ random Newman\_ Watts\_Strogatz small world graphs. All the graphs in a data set contain the same number of nodes. Thus the three data sets contain graphs of size $100$, $500$, and $1000$. We generated the layout for all these graphs using the fast multi-pole multilevel (FMMM) layout~\cite{fmmm}. Aside from experimenting on randomly generated data~\cite{Git}, we also tried SEG on two commonly used data sets:   \emph{Locations of 1000 Most Populated Cities} and \emph{US Airports}~\cite{Git2}.

\begin{sidewaystable}[pt]
%\begin{table}[pt]
\vspace{12cm}
\noindent\makebox[.8\textwidth]{%
\addtolength{\tabcolsep}{0.5pt}
\begin{tabular}{lcc|ccccccccc}
Configuration& \multicolumn{1}{c}{\begin{sideways}Point Count\end{sideways}} & \multicolumn{1}{c}{\begin{sideways}Data Set\end{sideways}} & \multicolumn{1}{c}{\begin{sideways}Steiner Points\end{sideways}}& \multicolumn{1}{c}{\begin{sideways}Max Degree\end{sideways}} & \multicolumn{1}{c}{\begin{sideways}Average Degree\end{sideways}} & \multicolumn{1}{c}{\begin{sideways}Edge Count\end{sideways}} & \multicolumn{1}{c}{\begin{sideways}Max Edge Len\end{sideways}} & \multicolumn{1}{c}{\begin{sideways}Average Edge Length\end{sideways}} & \multicolumn{1}{c}{\begin{sideways}Total Edge Length\end{sideways}} & \multicolumn{1}{c}{\begin{sideways}Min Angle\end{sideways}} & \multicolumn{1}{c}{\begin{sideways}Spanning Ratio\end{sideways}} \\ \hline
\textit{\textbf{SEG}} & \textit{\textbf{100}} & \textit{\bf Rand1} & \textit{\textbf{197.25}} & \textit{\textbf{6.20}} & \textit{\textbf{2.55}} & \textit{\textbf{404.54}} & \textit{\textbf{80.33}} & \textit{\textbf{19.36}} & \textit{\textbf{7319.71}} & \textit{\textbf{45.00}} & \textit{\textbf{1.88}} \\
DEL C=0 & 100 & \textit{Rand1} & 0 & 9.55 & 5.66 & 283.29 & 304.15 & 50.50 & 14301.99 & 0.57 & 1.37 \\
DEL C=22.5 & 100 & \textit{Rand1} & 87.73 & 9.05 & 5.40 & 506.70 & 89.49 & 29.34 & 14803.31 & 22.68 & 1.44 \\
DEL C=33 & 100 & \textit{Rand1} & 315.60 & 8.18 & 5.56 & 1156.10 & 56.95 & 18.75 & 21392.89 & 33.07 & 1.59 \\ \hline
\textit{\textbf{SEG}} & \textit{\textbf{500}} & \textit{\bf Rand2}& \textit{\textbf{1085.44}} & \textit{\textbf{6.85}} & \textit{\textbf{2.63}} & \textit{\textbf{2262.01}} & \textit{\textbf{69.35}} & \textit{\textbf{12.58}} & \textit{\textbf{28451.72}} & \textit{\textbf{45.00}} & \textit{\textbf{2.07}} \\
DEL C=0 &  500 & \textit{Rand2} & 0 & 10.31 & 5.91 & 1478.01 & 317.07 & 28.97 & 42820.80 & 0.27 & 1.39 \\
DEL C=22.5 & 500 & \textit{Rand2}  & 253.14 & 9.48 & 5.75 & 2165.57 & 76.87 & 21.06 & 45576.34 & 22.55 & 1.60 \\
DEL C=33 & 500 &  \textit{Rand2}  & 1017.13 & 8.69 & 5.79 & 4398.15 & 47.44 & 14.35 & 62963.46 & 33.02 & 1.84 \\ \hline
\textit{\textbf{SEG}} & \textit{\textbf{1000}} & \textit{\bf Rand3}  & \textit{\textbf{2177.86}} & \textit{\textbf{7.01}} & \textit{\textbf{2.66}} & \textit{\textbf{4601.47}} & \textit{\textbf{58.10}} & \textit{\textbf{9.52}} & \textit{\textbf{43864.43}} & \textit{\textbf{45.00}} & \textit{\textbf{2.16}} \\
DEL C=0 & 1000 & \textit{Rand3} & 0 & 10.72 & 5.95 & 2974.35 & 284.05 & 21.16 & 62933.44 & 0.20 & 1.40 \\
DEL C=22.5 & 1000 &\textit{Rand3} & 472.20 & 9.75 & 5.83 & 4296.57 & 64.75 & 15.91 & 68346.21 & 22.53 & 1.96 \\
DEL C=33 & 1000 & \textit{Rand3}& 1933.36 & 8.90 & 5.86 & 8601.86 & 39.35 & 10.95 & 94099.05 & 33.01 & 2.16 \\ \hline
\textit{\textbf{SEG}} & \textit{\textbf{235}} & \textit{\textbf{AIR}} & \textit{\textbf{485}} & \textit{\textbf{7}} & \textit{\textbf{2.69}} & \textit{\textbf{970}} & \textit{\textbf{61.17}} & \textit{\textbf{9.95}} & \textit{\textbf{9651.47}} & \textit{\textbf{45.00}} & \textit{\textbf{1.96}} \\
DEL C=0 & 235 & AIR & 0 & 11 & 5.89 & 692 & 291.34 & 26.64 & 18432.64 & 0.06 & 1.39 \\
DEL C=22.5 & 235 & AIR & 221 & 9 & 5.57 & 1270 & 66.82 & 15.50 & 19682.53 & 22.53 & 1.48 \\
DEL C=33 & 235 & AIR & 729 & 8 & 5.66 & 2727 & 56.96 & 10.15 & 27691.84 & 33 & 1.73 \\ \hline
\textit{\textbf{SEG}} & \textit{\textbf{1000}} &  \textit{\textbf{CIT}} & \textit{\textbf{1913}} & \textit{\textbf{6}} & \textit{\textbf{2.59}} & \textit{\textbf{4102}} & \textit{\textbf{394.56}} & \textit{\textbf{18.14}} & \textit{\textbf{74447.77}} & \textit{\textbf{45.00}} & \textit{\textbf{2.35}} \\
DEL C=0 & 1000 & CIT & 0 & 12 & 5.95 & 2975 & 2024.00 & 50.93 & 151541.50 & 0.09 & 1.41 \\
DEL C=22.5 & 1000 & CIT & 1358 & 10 & 6.28 & 7414 & 373.59 & 27.64 & 192302.55 & 22.55 & 1.49 \\
DEL C=33 & 1000 & CIT & 4676 & 9 & 6.03 & 17139 & 166.38 & 18.34 & 308560.57 & 33.00 & 1.60 \\ 
\end{tabular}}
\centering
\caption{Results of our comparisons on 3 random and two real data sets. 
CIT marks the results related to World's most populated cities data set while AIR refers to the data set of US Airlines, lines that are unmarked are related to our random experimentation, based on averages of $1000$ instances. \textit{SEG} stands for Simplified Emanation Graph of grade $k=2$ and DEL C=$\alpha$ is a $\alpha^{\circ}$ constrained Delaunay Triangulation. %\jyoti{need to bring the numbers from excel files}
}
\label{tab:results1}
\end{sidewaystable}

Figure~\ref{fig:comparison} demonstrates our output on one of the sample data set of size 100, for a SEG of grade 2 along with normal, 22.5$^{\circ}$ and 33$^{\circ}$ constrained Delaunay triangulations, which are the exact configurations we used for this comparison.   %Figures~\ref{fig:comparison2}--\ref{fig:comparison3}  depict SEG and the corresponding constrained Delaunay triangulations for a sample of size 500 and 1000, respectively. 
Figure~\ref{fig:comparison3}  depicts SEG of grade 2 and the corresponding constrained Delaunay triangulations for a sample of size 1000.

Although one would like to have angular constraints higher than 33$^{\circ}$ and close to what emanation graph gives, the algorithm for constrained Delaunay triangulation does not guarantee termination for larger angular resolutions. We used Triangle~\cite{Triangle} to compute the Delaunay triangulations.

The metrics we chose to compare our samples are \textit{Steiner Point Count, Vertex Degree, Edge Count, Edge Length, Angle and Spanning Ratio}. Results are depicted in Table~\ref{tab:results1}, separated by different configurations and the number of vertices. %, as one should compare algorithms on vertex sets of equal size.
For the first three datasets (Rand1, Rand2 and Rand3), every row of the table shows the mean performance over all $1000$ instances of the graphs. The reason that we report the averages is because the average is a better representative when examining the properties for a graph family (i.e., small world graphs) than the outcomes for individual instances. %described executions.
 In comparison with $33^{\circ}$ constrained Delaunay triangulation, SEG {shows}:
\begin{itemize}
    \item[$\bullet$] Much better angular resolution ($45^\circ$ compared to $33^\circ$)
    \item[$\bullet$] Less than half the number of edges
    \item[$\bullet$] Less than half the total edge length
    \item[$\bullet$] Less than half the average vertex degree
    \item[$\bullet$] Slightly worse spanning ratio (within a factor of 1.18 when $n =100$ and $n=500$; and the comparable when $n=1000$)
    \item[$\bullet$] Comparable number of Steiner points (less than half the number of Steiner points for $n =100$; but slightly worse for $n=1000$)
\end{itemize}

The reason that SEG provides better angular resolution than that of $33^\circ$ constrained Delaunay triangulation is inherent to its construction, where the slopes of the edges are in $\{0,\pm 1, \pm \infty\}$.  The number of edges of SEG is smaller because it has only two edges adjacent to each Steiner point whereas much more is often needed in a $33^\circ$ constrained Delaunay triangulation. 
Together with the fact that SEG does not consider filling the empty spaces around Steiner points, this significantly reduces  the total edge length and average vertex degree. The spanning ratio of SEG appears to be slightly worse. % as at the Steiner points.
A potential reason is that every bend on a path at a Steiner point is of at least   $45^\circ$, whereas in a $33^\circ$ constrained Delaunay triangulation a path has an opportunity to reduce its  bend angles by leveraging the high degree of the Steiner points. The number of Steiner points in SEG appears to be smaller when the number of points is small, but it becomes slightly larger as the number of points increases. A potential reason is that two points can directly be connected in a $33^\circ$ constrained Delaunay triangulation, whereas in SEG, they must be connected through a Steiner point. Hence for a dense point set, this benefit of a $33^\circ$ constrained Delaunay triangulation may outweigh SEG.

\section{Conclusion} 
\label{sec:con}% or Conclusion  %  or Discussion  %  or just leave it out
The most obvious open question following our work is to  find a tight bound on the spanning ratio for emanation graphs of grade 2. Another interesting research direction is to find a geometric spanner that is better than the emanation graphs of grade one; specifically, a max-degree-4 planar  geometric spanners with at most $4n$ Steiner points and a spanning ratio better than $\sqrt{10}$. It would be interesting to examine whether known bounded degree  spanners~\cite{DBLP:journals/dcg/BonichonKPX15} without Steiner points could be modified to construct such a spanner. It would also be interesting to examine whether emanation graphs admit local routing with small routing ratio.

%, which is currently achieved by emanation graphs of grade one. %In this paper we present an algorithm to simplify emanation graphs of grade $k=2$, and experimentally evaluate its aesthetic qualities compared to the Delaunay triangulation and constrained Delaunay triangulation. Our experimental result shows the potential of the simplified emanation graph to be considered as a good alternative to these traditional spanners. A theoretical open question is to prove a tight upper bound on spanning ratio of the simplified version. 
 
A natural extension of our work is to implement simplified emanation graphs in visualization systems such as  GraphMaps~\cite{DBLP:conf/gd/NachmansonPLRHC15} to compare the visual results with those generated by the Delaunay and constrained Delaunay triangulations. Although simplified emanation graphs appear to be promising in our experimental analysis, we do not know whether  they admit  a bounded spanning ratio.  Therefore, it would be interesting to further explore the spanning properties of these graphs. 

\begin{acknowledgements}
The research  D. Mondal is supported in part by the Natural Sciences and Engineering Research Council of Canada (NSERC). We thank anonymous reviewers for their helpful  suggestions and feedback to improve the presentation of the paper. 
\end{acknowledgements}

% Authors must disclose all relationships or interests that 
% could have direct or potential influence or impart bias on 
% the work: 
%
% \section*{Conflict of interest}
%
% The authors declare that they have no conflict of interest.

% BibTeX users please use one of
%\bibliographystyle{spbasic}      % basic style, author-year citations
%\bibliographystyle{spmpsci}      % mathematics and physical sciences
%\bibliographystyle{spphys}       % APS-like style for physics
%\bibliography{}   % name your BibTeX data base

% Non-BibTeX users please use

\end{document}